\definecolor{myblue}{rgb}{0, 0, 0.478}
\newcommand{\ZZ}{\ensuremath{\mathbb{Z}}}
\newcommand{\NN}{\ensuremath{\mathbb{N}}}
\newcommand{\RR}{\ensuremath{\mathbb{R}}}
\newcommand{\QQ}{\ensuremath{\mathbb{Q}}}
\newcommand{\EPS}{\ensuremath{\varepsilon}}
\newcommand{\DELTA}{\ensuremath{\delta}}
\newcommand{\ALPHA}{\ensuremath{\alpha}}
\newcommand{\KAPPA}{\ensuremath{\kappa}}
\newcommand{\MU}{\ensuremath{\mu}}
\newcommand{\GAMMA}{\ensuremath{\gamma}}
\newcommand{\LAMBDA}{\ensuremath{\ell}}
\newcommand{\SIGMA}{\ensuremath{\sigma}}
\newcommand{\DBC}[2]{\ensuremath{\textsc{DBC}_{#1}^{#2}}\xspace}
\newcommand{\PC}[1]{\ensuremath{\textsc{GC}_{#1}}\xspace}
\newcommand{\POPC}[2]{\ensuremath{\textsc{POPC}_{#1}^{#2}}\xspace}
\newcommand{\OPT}{\ensuremath{\textsc{Opt}}\xspace}
\newcommand{\ALG}{\ensuremath{\textsc{Alg}}\xspace}
\newcommand{\PACALG}{\ensuremath{\mathcal{A}}\xspace}
\newcommand{\DNF}{\ensuremath{\textsc{dnf}}\xspace}
\newcommand{\CR}[1]{\ensuremath{\textsc{cr}_{#1}}\xspace}
\newcommand{\EAR}[1]{\ensuremath{\textsc{er}_{#1}^\infty}\xspace}
\newcommand{\ECR}[1]{\ensuremath{\overline{\textsc{cr}}_{#1}}\xspace}
\newcommand{\PROFIT}[1]{\ensuremath{p\left(#1\right)}}
\newcommand{\LS}{\ensuremath{\LAMBDA\textsc{-Splitting}}}
\newcommand{\HY}[3]{\ensuremath{\textsc{Hyb}_{#1}^{#2,#3}}\xspace}
\newcommand{\TL}{\ensuremath{\lambda}\xspace}
\newcommand{\THRESHOLDALG}{\ensuremath{\Phi}\xspace}
\newcommand{\THRESHOLD}[1]{\ensuremath{\Phi_{\hspace{-0.05cm}#1}}\xspace}
\newcommand{\GROUP}[2]{\ensuremath{G_{\hspace{-0.05cm}#1}}^{\hspace{0.03cm}#2}\xspace}
\newcommand{\PLACEHOLDERS}[1]{\ensuremath{P_{\hspace{-0.05cm}#1}}\xspace}
\newcommand{\SIZE}[1]{\ensuremath{\abs{\mathcal{B}_{#1}}}}
\newcommand{\on}[1]{\ensuremath{\operatorname{#1}}\xspace}
\newcommand{\abs}[1]{\ensuremath{\left\lvert #1 \right\rvert}}
\newcommand{\LEV}[1]{\ensuremath{\on{lev}(#1)}}
\newcounter{res_count}
\theoremstyle{plain} 
\newtheorem{theorem}[res_count]{Theorem}
\newtheorem{lemma}[res_count]{Lemma}
\newtheorem{proposition}[res_count]{Proposition}
\newtheorem{observation}[res_count]{Observation}
\theoremstyle{definition}
\title{Online Bin Covering with Frequency Predictions}
\author{
\begin{minipage}{.5\linewidth}
\begin{center}
Magnus Berg\thanks{Berg was supported in part by the Independent Research Fund Denmark, Natural Sciences, grant DFF-0135-00018B and in part by the Innovation Fund Denmark, grant 9142-00001B, Digital Research Centre Denmark, project P40: Online Algorithms with Predictions.} \\ \vspace{0.15cm}
University of Southern Denmark \\
\texttt{magbp@imada.sdu.dk}
\end{center}
\end{minipage}
\begin{minipage}{.49\linewidth}
\begin{center}
Shahin Kamali \\ \vspace{0.15cm}
York University \\
\texttt{kamalis@yorku.ca}
\end{center}
\end{minipage}
}
\date{}
\begin{document}

\maketitle

\begin{abstract}
We study the discrete bin covering problem where a multiset of items from a fixed set $S \subseteq (0,1]$ must be split into disjoint subsets while maximizing the number of subsets whose contents sum to at least $1$.
We study the online discrete variant, where $S$ is finite, and items arrive sequentially. 
In the purely online setting, we show that the competitive ratios of best deterministic (and randomized) algorithms converge to $\frac{1}{2}$ for large $S$, similar to the continuous setting.
Therefore, we consider the problem under the prediction setting, where algorithms may access a vector of frequencies predicting the frequency of items of each size in the instance.
In this setting, we introduce a family of online algorithms that perform near-optimally when the predictions are correct. 
Further, we introduce a second family of more robust algorithms that presents a tradeoff between the performance guarantees when the predictions are perfect and when predictions are adversarial.
Finally, we consider a stochastic setting where items are drawn independently from any fixed but unknown distribution of $S$.
Using results from the PAC-learnability of probabilities in discrete distributions, we also introduce a purely online algorithm whose average-case performance is near-optimal with high probability for all finite sets $S$ and all distributions of $S$.
\end{abstract}

\section{Introduction}

\emph{Bin Covering} is a classical NP-complete~\cite{AJKL84} optimization problem where the input is a multiset of items, each with a size between $0$ and $1$.
The objective is to split the items into disjoint subsets, called \emph{bins}, while maximizing the number of bins whose contents sum to at least $1$~\cite{JS03}. The problem is often considered a dual to the bin packing problem, which asks for minimizing the number of bins, subject to each bin having a sum of at most 1.

In the online covering problem~\cite{CT87,CFLZ99,AJKL84}, items arrive one by one, and whenever an item arrives, an algorithm has to irrevocably place this item in an existing bin or open a new bin to place the item in. 
The existing results mostly consider a continuous setting in which items take any real value from $(0,1]$, and it is well known that a simple greedy strategy, \emph{Dual-Next-Fit} ($\DNF$), achieves an optimal competitive ratio of $\frac{1}{2}$~\cite{AJKL84}.

In this paper, we consider a discrete variant of Online Bin Covering, where item sizes belong to a finite, known set $S \subseteq (0,1]$. 
We abbreviate this problem by $\DBC{S}{}$.
The special case when $S = \{\frac{i}{k} \mid i=1,\ldots,k\}$ has been studied in the previous work. For example, Csirik, Johnson, and Kenyon~\cite{CJK01} developed online algorithms with good average-case performance based on the \emph{Sum of Squares} algorithm for Online Discrete Bin Packing~\cite{CJKSW99,CJKOSW06}. 
In this paper, we consider a more general setting only where $S$ may be \emph{any} finite subset of $(0,1]$.
%In this paper, we consider a more general setting, by letting $S$ be any finite subset of $(0,1]$.

%Throughout, we assume that algorithms are aware of $S$.

For measuring and comparing the quality of online algorithms for the $\DBC{S}{}$ problem, we rely on the classical \emph{competitive analysis} framework~\cite{BE98,K16}, where one measures the quality of an online algorithm by comparing the performance of the algorithm to the performance of an optimal offline algorithm optimizing for the best worst-case guarantee.

\subsection{Previous Work}

The possibilities for creating algorithms for Online Bin Covering are well-studied.
In the continuous setting, where items can take any size in $(0,1]$, Assmann et al.~\cite{AJKL84} present the $\frac{1}{2}$-competitive algorithm, $\DNF$, and Csirik and Totik~\cite{CT87} present an impossibility result showing that no deterministic algorithm can achieve a better competitive ratio.
Later,  Epstein~\cite{E01} proved that the same impossibility result holds for randomized algorithms as well.
Online Bin Covering has been studied under the advice setting~\cite{BFKL21,BNV23}, where algorithms can access an advice tape that has encoded information about the input sequence. 
The aim is to determine how much additional information, measured by the number of bits needed to encode the information, is necessary and sufficient to create online algorithms that are better than $\frac{1}{2}$-competitive and how well can algorithms perform when they are given a certain amount of information. 

In recent years, developments in machine learning have inspired the question of how online algorithms may benefit from machine-learned advice~\cite{LV21,KPS18}, commonly referred to as \emph{predictions}.
Unlike the advice model, the predictions may be erroneous or even adversarial.
Online algorithms with predictions is a rapidly growing field (see, e.g.,~\cite{ALPS}) that aims at deriving online algorithms that provide a tradeoff between \emph{consistency} and \emph{robustness}. The consistency of an online algorithm with prediction refers to its competitive ratio when predictions are error-free; ideally, the consistency of an algorithm is $1$ or close to 1. On the other hand, robustness refers to the competitive ratio assuming adversarial predictions; ideally, the robustness of an algorithm is close to the competitive ratio of the best purely online algorithm (with no prediction).
This ideal case is often not realizable, and so one often accompanies an analysis of an online problem with predictions by a consistency/robustness trade-off~\cite{ADJKR20,KPS18,WZ20}, giving explicit bounds on an algorithm's consistency as a function of its robustness, and vice versa.

To the authors' knowledge, there does not exist any previous work on Bin Covering with predictions. 
There do, however, exist previous work on the related bin packing problem~\cite{AKS22,ADJKR20}.

\subsection{Contribution}
Our contributions for $\DBC{S}{}$ can be summarized as follows. 
Throughout, we let $k = |S|$. Our results concern the power of prediction and learning in improving online discrete bin covering algorithms.
In the continuous setting, where items take \emph{any} real value in $(0,1]$, it is established that no improvements in the competitive ratio can be achieved via predictions that are of size independent of input length, even if the predictions are error-free. This follows from a result of~\cite{BFKL21} that states any algorithm with an advice of size $o(\log \log n)$ is no better than 2-competitive. This negative result yields relaxing the setting such that items come from a fixed, finite set, which is also studied in the related bin packing problem~\cite{AKS22}. %, which is also relevant for practical r. %We note that, in most appli

\begin{itemize}
\item[] \textbf{Purely online setting:} 
We briefly state some results on purely online algorithms for $\DBC{F_k}{}$, where $F_k = \{\frac{i}{k} \mid i = 1,2,\ldots,k\}$.
Based on ideas from~\cite{CT87} and~\cite{E01}, we show that:

\begin{theorem}\label{thm:impossiblity_dbc_k}
Let $\ALG$ be any deterministic or randomized online algorithm for $\DBC{F_k}{}$, with $k \geqslant 5$. 
Then, the competitive ratio of $\ALG$ is at most $\frac{1}{2} + \frac{1}{H_{k-1}}$, where $H_{k-1} = \sum_{i=1}^{k-1} \frac{1}{i}$.
\end{theorem}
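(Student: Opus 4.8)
The plan is to adapt the classical lower-bound constructions of Csirik–Totik~\cite{CT87} (deterministic) and Epstein~\cite{E01} (randomized) to the discrete set $F_k$. Since a randomized algorithm's expected performance is a convex combination of deterministic behaviors, it suffices by Yao's principle to exhibit a single distribution over input sequences on which every deterministic algorithm has expected competitive ratio at most $\tfrac12 + \tfrac1{H_{k-1}}$; this simultaneously handles both the deterministic and randomized claims. So the real work is designing that hard distribution.

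The construction I would use is a sequence of ``phases,'' indexed $i = 1, \dots, k-1$, where phase $i$ consists of many items of size $\tfrac{1}{k}$ presented in a way that, up to the end of phase $i$, an optimal offline solution could have paired them into bins of exactly $k$ items (covering bins of total size $1$), but an online algorithm that has committed items to partially-filled bins at ``level'' roughly $\tfrac{i}{k}$ is stuck. Concretely: feed a large batch $N$ of $\tfrac1k$-items; then with appropriate probabilities the instance either stops (so \OPT $= \lfloor N/k \rfloor$-ish from the uniform small items, forcing \ALG to have wasted the items it raised above level $1-\tfrac1k$ without completing) or continues with a batch of larger items (sizes of the form $\tfrac{j}{k}$ for suitable $j$) that let \OPT top off its small-item bins cheaply while \ALG, having spread its small items across too many bins, cannot. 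The harmonic term $H_{k-1}$ enters because the stopping probabilities are tuned as $\tfrac1i$ at phase $i$: an algorithm that hedges by keeping bins near level $\tfrac{i}{k}$ pays in the continuation, and summing the per-phase losses against the $\tfrac1i$ weights yields the $\tfrac1{H_{k-1}}$ additive slack over $\tfrac12$. The requirement $k \geqslant 5$ should come from needing enough distinct item sizes $\tfrac1k, \tfrac2k, \dots$ for the phase structure to have room — with too few sizes the adversary cannot separate ``\ALG is too greedy'' from ``\ALG is too conservative.''

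The key steps, in order, are: (1) set up Yao's principle and reduce to a deterministic lower bound against a fixed distribution; (2) define the phased random instance on $F_k$, specifying batch sizes (all $\Theta(N)$ for $N \to \infty$, so that $\pm O(1)$ rounding and boundary effects vanish in the ratio) and the phase-$i$ stopping probability proportional to $\tfrac1i$; (3) compute \OPT on each realization — here \OPT simply groups the $\tfrac1k$-items optimally and uses the larger items to complete bins, giving \OPT $= \Theta(N)$ with an explicit constant per realization; (4) show that any deterministic \ALG, described by how many bins it maintains at each level after each batch, satisfies a per-phase inequality bounding the number of bins it can complete; (5) take the expectation over the phase at which the instance stops, and optimize the adversary's (already fixed) probabilities against \ALG's best response — this is where the $\sum 1/i$ appears — to conclude the expected ratio is at most $\tfrac12 + \tfrac1{H_{k-1}} + o(1)$, then let $N \to \infty$.

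The main obstacle I anticipate is step (4): cleanly bookkeeping an arbitrary deterministic algorithm's state. Unlike bin packing, in covering an algorithm may keep many bins open at intermediate levels, and once an item is placed it cannot be moved, so I need a potential-function or counting argument that charges each ``wasted'' unit of item size (item mass placed into bins that never get covered, or overfill beyond the threshold $1$) against the loss relative to \OPT, uniformly over all online strategies. Getting the constants in this charging tight enough to land exactly at $\tfrac12 + \tfrac1{H_{k-1}}$ — rather than a weaker $\tfrac12 + O(1/\log k)$ with a worse constant — will require carefully matching the batch sizes and stopping probabilities to the algorithm's optimal hedging strategy, essentially solving a small online game at each phase. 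I would expect to verify the $k \geqslant 5$ threshold and the exact harmonic constant by checking the base cases $k = 5, 6$ by hand against the general recurrence.
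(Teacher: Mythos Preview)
Your high-level plan --- Yao's principle, a common prefix that forces the algorithm to commit, then an adversarial continuation --- is correct, but the specifics diverge from the paper in ways that would complicate your calculation and that misidentify where the harmonic number comes from.

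The paper's construction is much simpler than a multi-phase stopping process. There is one batch of $n$ items of size $\tfrac{1}{k}$, after which the adversary chooses $i \in \{1,\dots,k-1\}$ and appends $\lfloor n/i \rfloor$ items of size $\tfrac{k-i}{k}$; call this $\sigma_i^n$. Then $\OPT(\sigma_i^n) = \lfloor n/i \rfloor$ (pair each large item with $i$ small ones). The harmonic number does \emph{not} come from tuning stopping probabilities proportional to $1/i$ as you propose: for the randomized bound the paper uses the \emph{uniform} distribution on $\{\sigma_i^n\}_{i=1}^{k-1}$, and $H_{k-1}$ appears because $\mathbb{E}[\OPT] \approx \tfrac{1}{k-1}\sum_{i=1}^{k-1} \tfrac{n}{i} = \tfrac{n\, H_{k-1}}{k-1}$. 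With your $1/i$ weights you would instead pick up a $\sum 1/i^2$ term and the constants would not close.

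Your anticipated obstacle --- a delicate potential-function argument tracking an arbitrary algorithm's bin levels --- is not needed. After the $n$ small items, let $q_j$ (for $j=1,\dots,k-1$) be the number of bins holding between $j$ and $k-1$ small items, and $q_k$ the number holding at least $k$; trivially $\sum_{j=1}^{k-1} q_j \leq n - k\, q_k$. On $\sigma_i^n$, only the $q_i$ partially-filled bins can be completed by a single large item, so $\ALG(\sigma_i^n) \leq q_k + q_i + \tfrac{1}{2}\bigl(\lfloor n/i\rfloor - q_i\bigr) = \tfrac{1}{2}\bigl(\lfloor n/i\rfloor + q_i\bigr) + q_k$. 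Summing this over $i$ (deterministic case) or averaging uniformly (Yao) against $\OPT(\sigma_i^n) = \lfloor n/i \rfloor$ gives the bound directly --- no charging scheme, no per-phase game. The condition $k \geq 5$ is used only to ensure $H_{k-1} > 2$, so that $\tfrac{1}{2} + \tfrac{1}{H_{k-1}} < 1$ and the statement has content.
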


The proof for the deterministic case can be found in Theorem~\ref{thm:impossibility_deterministic_app}, and using Yao's Principle~\cite{Y77,BE98,K16} we generalize to the randomized case in Theorem~\ref{thm:impossibility_randomized_app} (see Appendix~\ref{sec:online_setting}).

Two immediate consequences of Theorem~\ref{thm:impossiblity_dbc_k} are the well-known facts~\cite{CT87,E01}, that the competitive ratio of any deterministic or randomized algorithm for Online Bin Covering is at most $\frac{1}{2}$. 
This shows that Online Bin Covering is still a hard problem, even after discretization.

\item[] \textbf{Prediction setting:} 
We study the $\DBC{S}{}$ problem under a prediction setting, where predictions concerning the frequency of item sizes are available. 
We start with an impossibility result that establishes a consistency/robustness tradeoff for this prediction scheme (Theorem~\ref{thm:consistency_robustness_tradeoff}).
We then present a family of online algorithm, named \emph{Group Covering}, which are near-optimal when the predictions are error-free, for all finite sets $S \subseteq (0,1]$ (Theorem~\ref{thm:consistency_of_profilecovering}). 
Further, we create a family of hybrid algorithms that accepts a parameter $\TL$, quantifying one's trust in the predictions. 
We establish a consistency/robustness tradeoff that bounds the consistency and robustness of these hybrid algorithms as a function of $\TL$ (Theorems~\ref{thm:consistency_hybrid} and~\ref{thm:robustness_hybrid}).
\item[] \textbf{Stochastic setting:} 
Motivated by the work of Csirik, Johnson, and Kenyon~\cite{CJK01}, we study the problem under a stochastic setting, where item sizes follow an unknown distribution. 
Unlike~\cite{CJK01}, which assumes items are of sizes $\frac{i}{k}$, for $i = 1,2,\ldots,k$, we do not make any assumption about input set $S$.  
We use a PAC-learning bound~\cite{C20,SB14} to create a family of online algorithms without predictions, whose expected ratio~\cite{CJK01} is near-optimal with high probability, for any finite set $S$, and any distribution $D$ of $S$ (Theorem~\ref{thm:expected_ratio_popc}).
\end{itemize}

\section{Preliminaries}

\subsection{Online Discrete Bin Covering}
Fix a finite set $S = \{s_1,s_2,\ldots,s_k\} \subseteq (0,1]$. 
An instance for \emph{$S$-Discrete Bin Covering} is a sequence $\SIGMA = \langle a_1,a_2,\ldots,a_n\rangle$ of items, where $a_i \in S$, for $i\in[n]$. 
The task of an algorithm $\ALG$ is to place the items in $\SIGMA$ into bins $B_1,B_2,\ldots,B_t$, maximizing the number of bins, $B$, for which $\sum_{a \in B} a \geqslant 1$. 
For any bin, $B$, we call $\LEV{B} = \sum_{a' \in B} a'$ the \emph{level} of $B$.
Unless otherwise mentioned, we assume that algorithms are aware of $S$.
In the online setting, the items are presented one-by-one to $\ALG$, and upon receiving an item $a$, $\ALG$ has to place $a$ in a bin.
This decision is irrevocable. 
We abbreviate \emph{Online $S$-Discrete Bin Covering} by $\DBC{S}{}$.

Throughout, we assume that $k \geqslant 2$.
Further, for any $k \in \ZZ^+$, we set $F_k = \left\{ \frac{i}{k} \mid \mbox{for $i=1,2,\ldots,k$}\right\}$, and abbreviate $\DBC{F_k}{}$ by $\DBC{k}{}$.

\subsection{Performance Measures}
Given an online maximization problem, $\Pi$, an online algorithm, $\ALG$, for $\Pi$, and an instance, $\SIGMA$, of $\Pi$, we let $\ALG[\SIGMA]$ be $\ALG$'s solution on instance $\SIGMA$ and $\ALG(\SIGMA)$ be the profit of $\ALG[\SIGMA]$.
If $\ALG$ is deterministic, then the \emph{competitive ratio} of $\ALG$ is
\begin{align*}
\CR{\ALG} = \sup\{c \in (0,1] \mid \exists b > 0 \colon \forall \SIGMA \colon \ALG(\SIGMA) \geqslant c \cdot \OPT(\SIGMA) - b \},
\end{align*}
where $\OPT$ is an offline optimal algorithm for $\Pi$.
Further, $\ALG$ is said to be \emph{$c$-competitive} if $c \leqslant \CR{\ALG}$.

For a fixed finite set $S = \{s_1,s_2,\ldots,s_k\} \subseteq (0,1]$, and a fixed (unknown) distribution $D$ of $S$, the \emph{asymptotic expected ratio}~\cite{CW98,CJK01} of an online algorithm, $\ALG$, is
\begin{align}\label{eq:asymptotic_expected_ratio}
\EAR{\ALG}(D) = \liminf_{n\to\infty} \mathbb{E}_D \left[ \frac{\ALG(\SIGMA_n(D))}{\OPT(\SIGMA_n(D))} \right],
\end{align}
where $\SIGMA_n(D)$ is a sequence of $n$ independent identically distributed random variables, $\SIGMA_n(D) = \langle X_1,X_2,\ldots,X_n\rangle$, where $X_i \sim D$, for all $i =1,2,\ldots,n$\footnote{The particular choice of notation is due to the items being random variables rather than fixed items.}.
% We are interested in a slightly stronger guarantee, which we call the \emph{expected ratio}, 
% \begin{align*}
% \ER{\ALG}(D) = \sup\{c \in (0,1] \mid \exists b \colon \forall n \colon \ALG(\SIGMA_n(D)) \geqslant c \cdot \OPT(\SIGMA_n(D)) - b\}.
% \end{align*}
% Observe that if $\ER{\ALG}(D) \geqslant c$, then $\EAR{\ALG}(D) \geqslant c$, whereas the converse may not be true.

% \shahin{I suggest using only one of the above measures; presenting to closely-related measures only adds to confusion and has little benefit.}

When an algorithm, $\ALG$, has access to predictions, the \emph{consistency} of $\ALG$, and the \emph{robustness} of $\ALG$, is $\ALG$'s competitive ratio when the predictions are error-free and adversarial, respectively.

Throughout, we let $[n] = \{1,2,\ldots,n\}$.

\section{Predictions Setting}

In this section, we assume that algorithms are given a \emph{frequency prediction}, which, for a fixed instance $\SIGMA$, and each item $s_i \in S$, predicts what fraction of items in $\SIGMA$ are of size $s_i$. 
We analyze how algorithms may benefit from this additional information, both when the information is reliable and when it is unreliable.

Formally, given a finite set $S = \{s_1,s_2,\ldots,s_k\} \subseteq (0,1]$, and an instance, $\SIGMA$, of $\DBC{S}{}$, we let $n_i^\SIGMA$ be the number of items of size $s_i$ in $\SIGMA$, $n^\SIGMA$ be the total number of items in $\SIGMA$, and $f_i^\SIGMA = \frac{n_i^\SIGMA}{n^\SIGMA}$. 
We call $f_i^\SIGMA$ the \emph{frequency} of items of size $s_i$ in $\SIGMA$, and set $\bm{{f^\SIGMA}} = (f_1^\SIGMA,f_2^\SIGMA,\ldots,f_k^\SIGMA)$.
When $\SIGMA$ is clear from the context, we abbreviate $n_i^\SIGMA$, $n^\SIGMA$, $f_i^\SIGMA$, and $\bm{{f^\SIGMA}}$, by $n_i$, $n$, $f_i$, and $\bm{{f}}$, respectively.
Further, we sometimes write $n_i^N$ for $n_i^\SIGMA$, when $N$ is a solution to $\SIGMA$.

Throughout, we abbreviate $\DBC{S}{}$ with frequency predictions by $\DBC{S}{\mathcal{F}}$. 
An instance for $\DBC{S}{\mathcal{F}}$ is a tuple $(\SIGMA,\bm{{\hat{f}}})$ consisting of a sequence of items, $\SIGMA$, and a vector of predicted frequencies $\bm{{\hat{f}}} = \left( \hat{f}_1, \hat{f}_2,\ldots,\hat{f}_k \right)$.

It is well-known that probabilities in discrete distributions are PAC-learnable, as shown in~\cite{C20}. 
That is, there exists a polynomial-time algorithm that learns the probabilities in discrete distributions to arbitrary precision with a confidence that is arbitrarily close to 1, given sufficiently many random samples (see~\cite{SB14} for a formal definition of PAC-learnability).
This makes frequency predictions easily attainable when a lot of historical data is available.

\subsection{A Consistency-Robustness Trade-Off for $\DBC{\mathcal{S}_k}{\mathcal{F}}$}

In the following, by a \emph{wasteful} algorithm, we mean an algorithm that sometimes places an item, $I$, in a bin, $B$, for which $\LEV{B} \geqslant 1$ before $I$ was placed in $B$.
Any wasteful algorithm can be trivially converted to an algorithm that is equally good (possibly better) and avoids placing items into already-covered bins.
Therefore, in what follows, we assume that all algorithms are non-wasteful. 

% \shahin{If the above definition of wasteful only used in Theorem 3, you can move it to the proof.} 

\begin{theorem}\label{thm:consistency_robustness_tradeoff}
Any $(1-\ALPHA)$-consistent deterministic algorithm for $\DBC{k}{\mathcal{F}}$ is at most $2\ALPHA$-robust.
\end{theorem}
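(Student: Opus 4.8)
The plan is to construct an adversarial family of instances on which no single deterministic algorithm can be simultaneously good when the prediction is correct and good when it is not. The core idea: pick two instances $\SIGMA_1$ and $\SIGMA_2$ that share a common prefix but diverge afterwards, and feed the algorithm the \emph{same} frequency prediction $\bm{\hat f}$ — one that is error-free for $\SIGMA_1$. Since the algorithm is deterministic and the prediction is identical in both scenarios, its behavior on the common prefix is forced to be the same. If being $(1-\ALPHA)$-consistent on $\SIGMA_1$ forces a particular packing of the prefix, that same packing must then be "paid for" on $\SIGMA_2$, where the prediction is wrong, and we bound the resulting robustness from above by $2\ALPHA$.

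\textbf{Concrete construction.} First I would work inside $\DBC{k}{\mathcal F}$, so item sizes are multiples of $1/k$. A clean choice is to use items of size $1/k$ and items of size $(k-1)/k$. Let the prefix consist of a large number $N$ of items of size $1/k$. The full instance $\SIGMA_1$ continues with enough additional $1/k$-items (and possibly $(k-1)/k$-items) so that $\OPT(\SIGMA_1)$ is large and so that the frequency vector $\bm{\hat f}$ of $\SIGMA_1$ is a valid error-free prediction; $\SIGMA_2$ instead continues after the prefix by appending a batch of $(k-1)/k$-items, chosen so that $\bm{f^{\SIGMA_2}} \neq \bm{\hat f}$ (the prediction is now adversarially wrong for $\SIGMA_2$). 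The key tension: to be $(1-\ALPHA)$-consistent on $\SIGMA_1$, the algorithm must, on the prefix of $1/k$-items, consolidate them into nearly-full or full bins rather than spreading them out — call the fraction it is "allowed" to leave scattered at most roughly $\ALPHA$ of them. But on $\SIGMA_2$, the only way to cover bins using the appended $(k-1)/k$-items is to have left single $1/k$-items sitting alone in otherwise-empty bins (each such bin is then completed by one $(k-1)/k$-item). Since the algorithm consolidated the prefix for the sake of consistency, it has too few such "waiting" bins on $\SIGMA_2$, and its profit there is small relative to $\OPT(\SIGMA_2)$, which can complete every appended large item.

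\textbf{Doing the bookkeeping.} After fixing the construction I would: (i) compute $\OPT(\SIGMA_1)$ and $\OPT(\SIGMA_2)$ exactly (both essentially linear in the relevant counts); (ii) from the consistency hypothesis $\ALG(\SIGMA_1) \ge (1-\ALPHA)\OPT(\SIGMA_1) - b$, extract a lower bound on how many $1/k$-items the algorithm has placed into bins that already contain other items by the end of the prefix — equivalently an upper bound, of the form $\approx \ALPHA \cdot (\text{something})$, on the number of bins holding exactly one $1/k$-item; (iii) plug that bound into the profit on $\SIGMA_2$: the bins the algorithm can cover on $\SIGMA_2$ using the large items is at most the number of such singleton bins, giving $\ALG(\SIGMA_2) \le 2\ALPHA \cdot \OPT(\SIGMA_2) + O(1)$ after choosing the number of appended large items appropriately relative to $N$. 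Scaling $N \to \infty$ (and hence $\OPT \to \infty$) kills the additive constant $b$ and yields robustness at most $2\ALPHA$. The factor $2$ should emerge naturally from the ratio between $\OPT(\SIGMA_1)$ (about $N/k$ covered bins from the prefix) and $\OPT(\SIGMA_2)$ (about $N$ covered bins, one per appended large item) — i.e. the adversary gets a factor-$k$ boost in $\SIGMA_2$ while the algorithm's consistency budget was measured against the factor-$1$ regime, and the arithmetic collapses to $2\ALPHA$.

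\textbf{The main obstacle} will be pinning down the construction so the numbers align to \emph{exactly} $2\ALPHA$ rather than some weaker constant: one must choose the prefix length, the number of appended large items, and the tail of $\SIGMA_1$ so that (a) $\bm{\hat f}$ is genuinely error-free for $\SIGMA_1$, (b) the consistency constraint on $\SIGMA_1$ translates cleanly into a per-bin singleton bound, and (c) that bound is tight against $\OPT(\SIGMA_2)$. A secondary subtlety is handling \emph{non-wastefulness} (already assumed WLOG in the excerpt) and making sure the algorithm cannot cheat by using the tail of $\SIGMA_1$ — this is why the divergence point must come \emph{before} any item that distinguishes the two instances, so that determinism plus the identical prediction genuinely forces identical prefix behavior. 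Once the construction is calibrated, the inequalities are routine.
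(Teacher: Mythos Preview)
Your high-level plan (shared prefix, two continuations, same prediction) is right, but your concrete construction is inverted and will not yield any bound below $\tfrac12$, let alone $2\alpha$.

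The specific error is the claim that ``on $\SIGMA_2$, the only way to cover bins using the appended $(k-1)/k$-items is to have left single $1/k$-items sitting alone.'' This is false: two items of size $\tfrac{k-1}{k}$ already cover a bin, since $2\cdot\tfrac{k-1}{k}\ge 1$ for $k\ge 2$. So whatever the algorithm did with the small-item prefix, on $\SIGMA_2$ it can simply pair up the appended large items and cover roughly half of them, achieving ratio close to $\tfrac12$ on $\SIGMA_2$ irrespective of $\alpha$. The tension you describe evaporates, and your bookkeeping in step~(iii) will not produce $2\alpha$.

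The paper's construction reverses the roles. The common prefix consists of $n$ items of size $\tfrac{k-1}{k}$ (the \emph{large} items); $\SIGMA_1^n$ then continues with $n$ items of size $\tfrac{1}{k}$, while $\SIGMA_2^n$ is the prefix alone. Now there is no suffix on $\SIGMA_2^n$ in which to recover. Consistency on $\SIGMA_1^n$ (where $\OPT=n$) forces the algorithm to leave almost all large items in singleton bins during the prefix, waiting for the small items; but on $\SIGMA_2^n$ those singleton bins remain uncovered, and $\ALG(\SIGMA_2^n)$ equals only the number of large-item \emph{pairs} the algorithm happened to form, which the consistency constraint caps at roughly $\alpha n$. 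Since $\OPT(\SIGMA_2^n)=\tfrac{n}{2}$, the ratio is at most $2\alpha$.

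The two design choices you missed: (i) put the \emph{large} items in the prefix, so that consistency forces the algorithm into a fragile state (many uncovered singleton bins) rather than a safe one; and (ii) make the bad instance a \emph{truncation} rather than an extension, so there are no later items with which to repair that state.
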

\begin{proof}
Let $\ALG$ be any deterministic online algorithm for $\DBC{\mathcal{S}_k}{\mathcal{F}}$.
Consider the instance $(\SIGMA_1^n,\bm{{\hat{f}}})$, with $\bm{{\hat{f}}} = ( \hat{f}_{s_1},\hat{f}_{s_2},\ldots,\hat{f}_{s_k} )$, where
\begin{align*}
\SIGMA_1^n = \left\langle \left\langle \frac{k-1}{k} \right\rangle^n , \left\langle \frac{1}{k} \right\rangle^n \right\rangle \hspace{0.2cm} \text{and} \hspace{0.2cm} \hat{f}_{s_i} = \begin{cases}
\frac{1}{2}, &\mbox{if $s_i \in \{\frac{1}{k},\frac{k-1}{k}\}$} \\
0, &\mbox{otherwise.}
\end{cases}
\end{align*}
Clearly, $\bm{{\hat{f}}}$ is a perfect prediction for $\SIGMA_1^n$, and $\OPT(\SIGMA_1^n) = n$.
Hence, by the consistency of $\ALG$, there exists a constant $b$, such that
\begin{align}\label{eq:alg_consistency}
\ALG(\SIGMA_1^n,\bm{{\hat{f}}}) \geqslant (1-\ALPHA) \cdot \OPT(\SIGMA_1^n) - b = (1-\ALPHA) \cdot n - b.
\end{align}
Let $\mathcal{B}_i$, for $i = 1,2$, be the collection of bins that $\ALG$ places $i$ items of size $\frac{k-1}{k}$ in.
Then,
\begin{align*}
\ALG(\SIGMA_1^n , \bm{{\hat{f}}}) \leqslant \SIZE{1} + \SIZE{2} + \frac{n - \SIZE{1}}{k}.
\end{align*}
Since $\ALG$ is non-wasteful, $n = \SIZE{1} + 2\cdot\SIZE{2}$, and so, by Equation~\eqref{eq:alg_consistency}, 
\begin{align*}
(1-\ALPHA) \cdot (\SIZE{1} + 2\cdot\SIZE{2}) - b \leqslant \SIZE{1} + \frac{(k+2) \cdot \SIZE{2}}{k}.
\end{align*}
This implies a lower bound on $\SIZE{1}$ of
\begin{align}\label{eq:lower_bound_on_size_1}
\frac{n\cdot \left( 1 - 2\cdot \ALPHA - \frac{2}{k}\right) - 2\cdot b}{1 - \frac{2}{k}} \leqslant \SIZE{1}.
\end{align}
Hence, to be $(1-\ALPHA)$-consistent then, after the first $n$ requests from $(\SIGMA_1^n,\bm{{\hat{f}}})$, $\ALG$ must have at least $\frac{n\cdot \left( 1 - 2\cdot \ALPHA - \frac{2}{k}\right) - 2\cdot b}{1 - \frac{2}{k}}$ open bins with exactly one item of size $\frac{k-1}{k}$ in.

Consider the instance $(\SIGMA_2^n,\bm{{\hat{f}}})$, with imperfect predictions, where 
\begin{align*}
\SIGMA_2^n = \left\langle \frac{k-1}{k} \right\rangle^n.
\end{align*}
Since the first $n$ requests of $\SIGMA_1^n$ and $\SIGMA_2^n$ are identical, $\ALG$ cannot distinguish the instances $(\SIGMA_1^n ,\bm{{\hat{f}}})$ and $(\SIGMA_2^n,\bm{{\hat{f}}})$ until it has seen the first $n$ items.
Hence, since $\ALG$ is deterministic, it distributes the first $n$ items identically on the two instances.
Since $n = \SIZE{1} + 2\cdot\SIZE{2}$, Equation~\eqref{eq:lower_bound_on_size_1} implies that
\begin{align*}
\ALG(\SIGMA_2^n, \bm{{\hat{f}}}) \leqslant \SIZE{2} = \frac{n-\SIZE{1}}{2} \leqslant \frac{n - \frac{n\cdot \left( 1 - 2\cdot \ALPHA - \frac{2}{k}\right) - 2\cdot b}{1 - \frac{2}{k}}}{2} = \frac{2 \cdot n \cdot \ALPHA + 2 \cdot b}{2 - \frac{4}{k}}.
\end{align*}
Since $\OPT(\SIGMA_2^n) = \frac{n}{2}$, then, for all $n \in \ZZ^+$,
\begin{align*}
\frac{\ALG(\SIGMA_2^n,\bm{{\hat{f}}})}{\OPT(\SIGMA_2^n)} \leqslant \frac{\frac{2 \cdot n \cdot \ALPHA + 2 \cdot b}{2 - \frac{4}{k}}}{\frac{n}{2}} = \frac{4 \cdot n \cdot \ALPHA + 4 \cdot b}{n \cdot \left( 2 - \frac{4}{k}\right)} \leqslant 2\cdot \ALPHA - \frac{2\cdot b}{n},
\end{align*}	
and so $\ALG$ is at most $2\cdot \ALPHA$-robust.
\end{proof}

Observe, by the above proof, that Theorem~\ref{thm:consistency_robustness_tradeoff} applies to a wider range of sets $S$, beyond $S = F_k$. 
In particular, we only use two items in the proof, implying that Theorem~\ref{thm:consistency_robustness_tradeoff} can be stated for all finite sets $S \subseteq (0,1]$, for which $\{\frac{1}{k}, \frac{k-1}{k}\} \subseteq S$.

\subsection{A Near-Optimally Consistent Algorithm for $\DBC{S}{\mathcal{F}}$}

In this section, inspired by the \emph{Profile Packing} algorithm from~\cite{AKS22}, we present a family of algorithms named \emph{Group Covering}, parameterized by a parameter, $\EPS$, that receives frequency predictions, and outputs a $(1-\EPS$)-approximation of the optimal solution, assuming predictions are error-free. 
In other words, the algorithm achieves a consistency that is arbitrarily close to optimal.
For a fixed $\EPS > 0$, we let $\PC{\EPS}$ be the Group Covering algorithm given $\EPS$.

\paragraph{The Strategy of Group Covering}

Fix a finite set $S = \{s_1,s_2,\ldots,s_k\} \subseteq (0,1]$.
A \emph{non-wasteful bin type} is an ordered $l$-tuple $(a_1,a_2,\ldots,a_l)$ of items, with $a_i \in S$, for all $i \in [l]$, such that $a_1$ was placed in the bin first, then $a_2$, and so on, and such that $\sum_{i=1}^{l-1} a_i < 1$.
Observe that our definition implies an ordering of the items in bin types.
This ordering is essential for well-definedness in the case of Bin Covering. 
For example, we consider the bin type $(1/2,1/2,\EPS)$ wasteful, as the bin is covered after placing the second item of size $1/2$, but the bin type $(1/2,\EPS,1/2)$ is non-wasteful, as removing the top item will make the bin no longer covered.
Observe that, according to our definition, non-covered bins are also filled with respect to a non-wasteful bin type.
We let $\mathcal{T}_S$ denote the collection of all possible non-wasteful bin types given $S$, and set $\tau_S = \abs{\mathcal{T}_S}$ and $\tau_S^{m} = \on{max}_{t \in \mathcal{T}_S} \{\abs{t}\}$.  
For example, if $S = \left\{\frac{1}{k},\frac{k-1}{k}\right\}$ then,

\scalebox{.9}{
\begin{minipage}{1.1\textwidth}
\begin{align*}
    \mathcal{T}_S &= \left\{ \left\{\frac{k-1}{k}\right\} , \left\{ \frac{k-1}{k} , \frac{1}{k}\right\} , \left\{ \frac{k-1}{k} , \frac{k-1}{k}\right\}\right\} \\
    &\cup \left\{ \left\{ \underbrace{\frac{1}{k} , \frac{1}{k} , \ldots, \frac{1}{k}}_{\mbox{$i$ times}}  \right\} \mid i=1,2,\ldots,k \right\} \\
    &\cup \left\{ \left\{ \underbrace{\frac{1}{k} , \frac{1}{k} , \ldots, \frac{1}{k}}_{\mbox{$i$ times}} , \frac{k-1}{k}  \right\} \mid i=1,2,\ldots,k-1 \right\},
\end{align*}
\end{minipage}
}

$\tau_S = 2k + 2$ and $\tau_S^m = k$.

By the arguments in the previous subsection, there exists a non-wasteful optimal algorithm.
Throughout, we assume that $\OPT$ is non-wasteful.

For a fixed set, $S = \{s_1,s_2,\ldots,s_k\} \subseteq (0,1]$, and any $\EPS > 0$, the strategy of $\PC{\EPS}$ is given in Algorithm~\ref{alg:profile_covering}.
In words, given an instance of $\DBC{S}{\mathcal{F}}$, $(\SIGMA,\bm{{\hat{f}}})$, $\PC{\EPS}$ creates an optimal solution to the following sequence, based on $\bm{{\hat{f}}}$, 
\begin{align*}
\SIGMA_{\on{sub}} = \langle \lfloor \hat{f}_1 \cdot m_{k,\EPS} \rfloor , \lfloor \hat{f}_2 \cdot m_{k,\EPS} \rfloor, \ldots, \lfloor \hat{f}_k \cdot m_{k,\EPS} \rfloor \rangle,
\end{align*}
where $m_{k,\EPS} = m_{\EPS} + k$, and $m_{\EPS} = \lceil 3 \cdot \tau_S \cdot \tau_S^m \cdot \EPS^{-1} \rceil$, except that all items, $a$, in $\OPT[\SIGMA_{\on{sub}}]$ has been replaced with placeholders of size $a$.
A \emph{placeholder of size $a$} is a virtual item of size $a$, which reserves space for an item of size $a$.
We let $\PLACEHOLDERS{\bm{{\hat{f}}},\EPS}$ be the copy of $\OPT[\SIGMA_{\on{sub}}]$ containing placeholders.
To finish the initialization, $\PC{\EPS}$ opens the first \emph{group}, $\GROUP{\bm{{\hat{f}}},\EPS}{1}$; a copy of $\PLACEHOLDERS{\bm{{\hat{f}}},\EPS}$.

When an item, $a$, arrives, $\PC{\EPS}$ searches for a placeholder of size $a$ in the open groups, always searching in $\GROUP{\bm{{\hat{f}}},\EPS}{1}$ first, then $\GROUP{\bm{{\hat{f}}},\EPS}{2}$ second, and so on.
If such a placeholder exists, $\PC{\EPS}$ replaces the placeholder with $a$.
If no such placeholder exists, $\PC{\EPS}$ checks whether $\PLACEHOLDERS{\bm{{\hat{f}}},\EPS}$ contains such a placeholder, by checking whether $a \in \SIGMA_{\on{sub}}$.
If so, then $\PC{\EPS}$ opens a new group, $\GROUP{\bm{{\hat{f}}},\EPS}{i}$, i.e.\ a new copy of $\PLACEHOLDERS{\bm{{\hat{f}}},\EPS}$.
Then, it replaces a newly created placeholder with $a$.
Otherwise, then $\PC{\EPS}$ places $a$ in an \emph{extra}-bin using $\DNF$.
Extra bins are special bins reserved for items that $\PC{\EPS}$ did not expect to receive any of.

\paragraph{Analysis of $\PC{\EPS}$}

We say that a group, $\GROUP{\bm{{\hat{f}}},\EPS}{i}$, is \emph{completed} if all placeholders have been replaced by items, and we let $g_{\EPS}$ be the number groups that $\PC{\EPS}$ completes.
By construction, $\PC{\EPS}$ first completes $\GROUP{\bm{{\hat{f}}},\EPS}{1}$, then $\GROUP{\bm{{\hat{f}}},\EPS}{2}$, and so on.

% \shahin{It is good to have a table with notations. The current heavy formalism makes it hard to follow the paper.}
\begin{lemma}\label{lem:num_completed_groups}
Fix any finite set $S = \{s_1,s_2,\ldots,s_k\} \subseteq (0,1]$, any $\EPS \in (0,1)$, and any instance $(\SIGMA,\bm{{\hat{f}}})$ for $\DBC{S}{\mathcal{F}}$, with $\bm{{\hat{f}}} = \bm{{f}}$.
Then, $\left\lfloor \frac{n}{m_{k,\EPS}} \right\rfloor \leqslant g_{\EPS} \leqslant \left\lfloor \frac{n}{m_{\EPS}} \right\rfloor$. 
\end{lemma}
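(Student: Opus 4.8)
The plan is to pin down an exact combinatorial formula for $g_{\EPS}$ and then sandwich it. Since $\bm{\hat f} = \bm f$, write $q_i = \lfloor f_i\, m_{k,\EPS}\rfloor = \lfloor (n_i/n)\, m_{k,\EPS}\rfloor$ for the number of placeholders of size $s_i$ inside a single copy of $\PLACEHOLDERS{\bm{\hat f},\EPS}$, so that one group contains $P := \sum_{i=1}^{k} q_i$ placeholders in all. The first (and main) step is the structural claim that, for every size $s_i$ with $q_i \geqslant 1$, the $t$-th item of size $s_i$ to arrive is placed into group number $\lceil t/q_i\rceil$, whereas items of a size with $q_i = 0$ are all routed to extra bins. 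This is proved by induction on $t$ using the description of $\PC{\EPS}$: it scans the open groups in increasing index order, and it opens a fresh copy of $\PLACEHOLDERS{\bm{\hat f},\EPS}$ precisely when the current item has no matching placeholder among the currently open groups while $\SIGMA_{\on{sub}}$ does contain one. Hence, once the $s_i$-placeholders of group $j$ are used up (which, by the induction hypothesis, happens exactly when the $(j q_i)$-th size-$s_i$ item arrives), the next size-$s_i$ item either finds group $j+1$ already open with all $q_i$ of its $s_i$-placeholders still free, or triggers the opening of group $j+1$; either way it lands in group $j+1$.

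From this, a group $j$ is completed at the end of the input if and only if $n_i \geqslant j\, q_i$ for every $i$ with $q_i \geqslant 1$, and in particular groups are completed in index order, so
\begin{align*}
g_{\EPS} \;=\; \min_{\,i\,:\, q_i \geqslant 1}\; \left\lfloor \frac{n_i}{q_i} \right\rfloor .
\end{align*}
This minimum is taken over a nonempty set: since $\sum_i f_i = 1$ some $f_i \geqslant 1/k$, and then $q_i \geqslant \lfloor m_{k,\EPS}/k\rfloor \geqslant 1$ because $m_{k,\EPS} = m_{\EPS} + k > k$. (If $n < m_{k,\EPS}$ the formula gives $g_{\EPS}=0$, matching both claimed bounds, so we may assume $n \geqslant m_{k,\EPS}$.)

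For the lower bound, fix any $i$ with $q_i \geqslant 1$. Then $q_i \leqslant f_i\, m_{k,\EPS} = (n_i/n)\, m_{k,\EPS}$, so $n_i/q_i \geqslant n/m_{k,\EPS}$ and therefore $\lfloor n_i/q_i\rfloor \geqslant \lfloor n/m_{k,\EPS}\rfloor$; taking the minimum over such $i$ yields $g_{\EPS} \geqslant \lfloor n/m_{k,\EPS}\rfloor$. For the upper bound, apply $\lfloor x\rfloor > x-1$ to get $P = \sum_{i=1}^{k} q_i > \sum_{i=1}^{k} (f_i\, m_{k,\EPS} - 1) = m_{k,\EPS} - k = m_{\EPS}$, hence $P \geqslant m_{\EPS}+1$ since $P$ is an integer. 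Each completed group consumes exactly $P$ distinct items and there are only $n$ items, so $g_{\EPS}\, P \leqslant n$, which gives $g_{\EPS} \leqslant n/(m_{\EPS}+1) \leqslant n/m_{\EPS}$, and integrality of $g_{\EPS}$ finishes $g_{\EPS} \leqslant \lfloor n/m_{\EPS}\rfloor$.

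The step I expect to demand the most care is the structural claim together with the "completion in index order" consequence: one must argue carefully that the greedy leftmost-matching-placeholder rule, combined with the rule for when new groups are opened, forces the round-robin-by-group filling pattern, so that the clean identity $g_{\EPS} = \min_{i:q_i\geqslant 1} \lfloor n_i/q_i\rfloor$ holds exactly. Once that identity is in hand, both inequalities are a couple of lines of arithmetic with floors.
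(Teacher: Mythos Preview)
Your proof is correct and follows essentially the same approach as the paper's: both rely on the observation that a group $j$ is completed iff $n_i \geqslant j\,q_i$ for every $i$ with $q_i \geqslant 1$, and then bound $q_i$ and $P=\sum_i q_i$ against $m_{k,\EPS}$ and $m_{\EPS}$ respectively. The paper's proof is considerably terser --- it states the two key inequalities $\sum_i \lfloor f_i\, m_{k,\EPS}\rfloor \geqslant m_{\EPS}$ and $\lfloor n/m_{k,\EPS}\rfloor \cdot \lfloor f_i\, m_{k,\EPS}\rfloor \leqslant n_i$ in one line each --- whereas you make the structural claim explicit and derive the exact identity $g_{\EPS} = \min_{i:q_i\geqslant 1}\lfloor n_i/q_i\rfloor$, which is more work than needed but perfectly sound.
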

\begin{proof}
Since $\sum_{i=1}^k \lfloor f_i \cdot m_{k,\EPS} \rfloor \geqslant m_{\EPS}$, no group contains less than $m_{\EPS}$ items, and so $g_{\EPS} \leqslant \left\lfloor \frac{n}{m_{\EPS}} \right\rfloor$.
Further, for all $i \in [k]$, $\left\lfloor \frac{n}{m_{k,\EPS}} \right\rfloor \cdot \lfloor f_i \cdot m_{k,\EPS} \rfloor \leqslant n_i$, and so all placeholders of size $s_i$ in at least $\left\lfloor \frac{n}{m_{k,\EPS}} \right\rfloor$ groups are replaced by an item. 
\end{proof}

\begin{algorithm}
\caption{$\PC{\EPS}$}
\begin{algorithmic}[1]
\STATE \textbf{Input:} a $\DBC{S}{\mathcal{F}}$-instance. $(\SIGMA,\bm{{\hat{f}}})$
\STATE $j,l \leftarrow 1$ \label{line:pc_initialization_start}
\STATE Compute $\tau_S$, $\tau_S^{m}$, and $k = \abs{S}$ 
\STATE $m_{\EPS} \leftarrow  \lceil 3 \cdot \tau_S \cdot \tau_S^{m} \cdot \EPS^{-1} \rceil$
\STATE $m_{k,\EPS} \leftarrow m_{\EPS} + k$
\STATE $\SIGMA_{\on{sub}} \leftarrow \langle \lfloor \hat{f}_1 \cdot m_{k,\EPS} \rfloor, \lfloor \hat{f}_2 \cdot m_{k,\EPS} \rfloor,\ldots, \lfloor \hat{f}_k \cdot m_{k,\EPS} \rfloor \rangle$ 
\STATE $\PLACEHOLDERS{\bm{{\hat{f}}},\EPS} \leftarrow \emptyset$
\FORALL {$B \in \OPT[\SIGMA_{\on{sub}}]$}
	\STATE $B' \leftarrow \emptyset$ \COMMENT{Create a new empty bin}
	\FORALL {$a \in B$}
		\STATE $B' \leftarrow B' \cup \{p_a\}$ \COMMENT{Add a placeholder of size $a$ to $B'$}
	\ENDFOR
	\STATE $\PLACEHOLDERS{\bm{{\hat{f}}},\EPS} \leftarrow \PLACEHOLDERS{\bm{{\hat{f}}},\EPS} \cup B'$ \COMMENT{Add a copy of $B$ containing placeholders to $\PLACEHOLDERS{\bm{{\hat{f}}},\EPS}$} 
\ENDFOR
\STATE $\GROUP{\bm{{\hat{f}}},\EPS}{1} \leftarrow \PLACEHOLDERS{\bm{{\hat{f}}},\EPS}$ \COMMENT{Open the first group} \label{line:pc_initialization_end}
\WHILE {receiving items, $a$,} \label{line:pc_distribution_start}
	\STATE $b \leftarrow \texttt{true}$ \COMMENT{Marks whether $a$ still has to be placed} 
	\FOR[Go through open groups chronologically] {$i = 1,2,\ldots,l$} 
		\IF[To avoid trying to place $a$ multiple times] {$b$}		
			\IF[Search for $p_a$ in $\GROUP{\bm{{\hat{f}}},\EPS}{i}$] {$\exists B \in \GROUP{\bm{{\hat{f}}},\EPS}{i} \colon p_a \in B$}
				\STATE $B \leftarrow B \setminus \{p_a\} \cup \{a\}$ \COMMENT{Swap out placeholder, $p_a$, for $a$}
				\STATE $b \leftarrow \texttt{false}$ \COMMENT{$a$ has been placed in a bin} 
			\ENDIF
		\ENDIF
	\ENDFOR
	\IF[Checking whether $a$ has been placed] {$b$}
		\IF[Checking whether $a \in \SIGMA_{\on{sub}}$] {$\lfloor \hat{f}_{a} \cdot m_{k,\EPS} \rfloor \neq 0$}
			\STATE $l \leftarrow l + 1$
			\STATE $\GROUP{\bm{{\hat{f}}},\EPS}{l} \leftarrow \OPT[\SIGMA_{\on{sub}}]$ \COMMENT{Open a new group}
			\STATE Determine $B \in \GROUP{\bm{{\hat{f}}},\EPS}{l}$ such that $p_a \in B$, and $B \leftarrow B\setminus\{p_a\} \cup \{a\}$
		\ELSE[$a \not\in \SIGMA_{\on{sub}}$]
			\STATE $B^R_j \leftarrow B^R_j \cup \{a\}$ \COMMENT{Place $a$ in a \emph{random} bin using $\DNF$}
			\IF {$\LEV{B^R_j} \geqslant 1$}
				\STATE $j \leftarrow j+1$
				\STATE $B^R_j \leftarrow \emptyset$ \label{line:pc_distribution_end}
			\ENDIF
		\ENDIF
	\ENDIF
\ENDWHILE
\end{algorithmic}
\label{alg:profile_covering}
\end{algorithm}

Throughout, given a solution, $N$, to a collection of items, $\SIGMA$, we let $\PROFIT{N}$ be the profit of $N$.
Observe that $\PROFIT{\GROUP{\bm{{\hat{f}}},\EPS}{1}} = \PROFIT{\GROUP{\bm{{\hat{f}}},\EPS}{i}}$, for all $i \in [g_{\EPS}]$. \\ \\

\begin{lemma}\label{lem:profit_groups_perfect_predictions}
Fix any set $S = \{s_1,s_2,\ldots,s_k\} \subseteq (0,1]$, any $\EPS \in (0,1)$, and any instance, $(\SIGMA,\bm{{\hat{f}}})$, for $\DBC{S}{\mathcal{F}}$, with $\bm{{\hat{f}}} = \bm{{f}}$ and $\abs{\SIGMA} > m_{k,\EPS}^2 + m_{k,\EPS}$.
Then,
\begin{align*}
g_{\EPS} \cdot \PROFIT{\GROUP{\bm{{\hat{f}}},\EPS}{1}} \geqslant (1-\EPS) \cdot \OPT(\SIGMA).
\end{align*}
\end{lemma}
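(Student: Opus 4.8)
The plan is to combine two facts. First, $\PROFIT{\GROUP{\bm{{\hat{f}}},\EPS}{1}} = \OPT(\SIGMA_{\on{sub}})$: a completed group is a copy of $\OPT[\SIGMA_{\on{sub}}]$ in which every placeholder has been replaced by a genuine item of the same size, so it has the same bins and hence the same profit. Second, Lemma~\ref{lem:num_completed_groups} gives $g_{\EPS} \geqslant \left\lfloor \tfrac{n}{m_{k,\EPS}}\right\rfloor$. Since $\OPT(\SIGMA_{\on{sub}})\geqslant 0$, it therefore suffices to prove
\[
\left\lfloor \tfrac{n}{m_{k,\EPS}}\right\rfloor \cdot \OPT(\SIGMA_{\on{sub}}) \;\geqslant\; (1-\EPS)\cdot \OPT(\SIGMA).
\]
I would derive this from (i) a lower bound $\OPT(\SIGMA_{\on{sub}}) > \tfrac{m_{\EPS}}{n}\OPT(\SIGMA) - \tau_S$, obtained by rounding an optimal solution for $\SIGMA$ down by the factor $m_{\EPS}/n$, together with (ii) a crude linear lower bound $\OPT(\SIGMA) > \tfrac{n}{\tau_S^m} - 1$ and the trivial upper bound $\OPT(\SIGMA)\leqslant n$.

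For (i), fix a non-wasteful optimal solution $\OPT[\SIGMA]$ whose covered bins carry the maximum possible total item size. For each non-wasteful bin type $t\in\mathcal{T}_S$, let $y_t$ be the number of covered bins of $\OPT[\SIGMA]$ of type $t$, and let $c_i(t)$ be the number of size-$s_i$ items in $t$; then $\sum_t y_t = \OPT(\SIGMA)$, at most $\tau_S$ of the $y_t$ are nonzero, and $\sum_t y_t\, c_i(t)\leqslant n_i$ for every $i$. Put $z_t := \lfloor \tfrac{m_{\EPS}}{n}\,y_t\rfloor$. For each $i$, the quantity $\sum_t z_t\, c_i(t)$ is a nonnegative integer with $\sum_t z_t\, c_i(t)\leqslant \tfrac{m_{\EPS}}{n}\sum_t y_t\, c_i(t)\leqslant \tfrac{m_{\EPS}}{n}n_i = m_{\EPS}f_i \leqslant m_{k,\EPS}f_i$, hence $\sum_t z_t\, c_i(t)\leqslant \lfloor m_{k,\EPS}f_i\rfloor$, which is exactly the number of size-$s_i$ items in $\SIGMA_{\on{sub}}$. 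Thus placing $z_t$ bins of type $t$, for each $t$, is a feasible solution for $\SIGMA_{\on{sub}}$ in which all these bins are covered, so
\[
\OPT(\SIGMA_{\on{sub}}) \;\geqslant\; \sum_t z_t \;>\; \tfrac{m_{\EPS}}{n}\sum_t y_t - \tau_S \;=\; \tfrac{m_{\EPS}}{n}\OPT(\SIGMA) - \tau_S .
\]

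For (ii), the items of $\SIGMA$ lying outside the covered bins of the chosen $\OPT[\SIGMA]$ have total size strictly less than $1$ — otherwise one could form an additional covered bin from them, contradicting optimality — so there are fewer than $1/\min S \leqslant \tau_S^m$ of them; since every covered bin of a non-wasteful solution holds at most $\tau_S^m$ items, this yields $\OPT(\SIGMA) > \tfrac{n}{\tau_S^m} - 1$. It then remains to multiply $\left\lfloor \tfrac{n}{m_{k,\EPS}}\right\rfloor \geqslant \tfrac{n}{m_{k,\EPS}} - 1$ by the bound from (i) and expand, obtaining
\[
\tfrac{m_{\EPS}}{m_{k,\EPS}}\OPT(\SIGMA) - \tfrac{n\tau_S}{m_{k,\EPS}} - \tfrac{m_{\EPS}}{n}\OPT(\SIGMA) + \tau_S .
\]
Here $\tfrac{m_{\EPS}}{m_{k,\EPS}}\OPT(\SIGMA) = \bigl(1-\tfrac{k}{m_{k,\EPS}}\bigr)\OPT(\SIGMA) \geqslant \bigl(1-\tfrac{\EPS}{3}\bigr)\OPT(\SIGMA)$ using $\tau_S\geqslant k$ and $m_{\EPS}\geqslant 3\tau_S\tau_S^m\EPS^{-1}$; moreover $\tfrac{n\tau_S}{m_{k,\EPS}} \leqslant \tfrac{\EPS}{3}\cdot\tfrac{n}{\tau_S^m} < \tfrac{\EPS}{3}\bigl(\OPT(\SIGMA)+1\bigr)$ by (ii), and $\tfrac{m_{\EPS}}{n}\OPT(\SIGMA)\leqslant m_{\EPS} \leqslant \tfrac{\EPS}{3}\OPT(\SIGMA)$ since $\OPT(\SIGMA) > \tfrac{n}{\tau_S^m}-1$ and $n > m_{k,\EPS}^2$. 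As $\tfrac{\EPS}{3} < 1 \leqslant \tau_S$, collecting these bounds gives $\left\lfloor \tfrac{n}{m_{k,\EPS}}\right\rfloor\cdot\OPT(\SIGMA_{\on{sub}}) \geqslant (1-\EPS)\OPT(\SIGMA)$.

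The conceptual heart of the argument is the rounding step in (i): scaling the bin-type multiplicities of $\OPT[\SIGMA]$ down by $m_{\EPS}/n$ lands inside $\SIGMA_{\on{sub}}$ while losing at most $\tau_S$ covered bins. The main technical obstacle is the bookkeeping that follows: the heuristic $g_{\EPS}\cdot\OPT(\SIGMA_{\on{sub}})\approx \tfrac{n}{m_{k,\EPS}}\cdot\tfrac{m_{k,\EPS}}{n}\OPT(\SIGMA)=\OPT(\SIGMA)$ conceals three additive errors, and forcing each of them to be an $\tfrac{\EPS}{3}$-fraction of $\OPT(\SIGMA)$ is precisely what requires the linear lower bound $\OPT(\SIGMA)=\Omega_S(n)$ (via $\tau_S^m\geqslant 1/\min S$ and the hypothesis $\abs{\SIGMA}>m_{k,\EPS}^2+m_{k,\EPS}$, which makes $m_{\EPS}\ll\OPT(\SIGMA)$), the inequality $\tau_S\geqslant k$ so that the gap $m_{k,\EPS}-m_{\EPS}=k$ is small relative to $m_{\EPS}$, and the specific constant $3$ in the definition of $m_{\EPS}$.
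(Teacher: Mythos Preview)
Your argument is correct and is essentially the same approach as the paper's: both scale the bin-type multiplicities of a non-wasteful optimal solution for $\SIGMA$ down to fit into a single group, lose at most $\tau_S$ covered bins in the rounding, and then control three $\tfrac{\EPS}{3}$-size error terms using $\OPT(\SIGMA)\geqslant n/\tau_S^m$ (up to $-1$), $\tau_S\geqslant k$, and the definition of $m_{\EPS}$. The only organisational difference is that the paper scales by $1/g_{\EPS}$, obtains a piece $\overline{N}$ with $n_i^{\overline{N}}\leqslant \lfloor f_i m_{k,\EPS}\rfloor+1$, and then pays an extra $k$ to pass from $\overline{N}$ to a group, whereas you scale by $m_{\EPS}/n$ and use integrality of $\sum_t z_t c_i(t)$ to land \emph{inside} $\SIGMA_{\on{sub}}$ directly, trading that ``$+1$ per size'' loss for the factor $m_{\EPS}/m_{k,\EPS}=1-k/m_{k,\EPS}$ instead; the net bookkeeping is the same. (Minor remark: the clause ``whose covered bins carry the maximum possible total item size'' is not actually used---optimality alone already forces the leftover items to have total size $<1$ in step~(ii).)
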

\begin{proof}
We show this by creating a solution, $N$, based on $\OPT[\SIGMA]$, such that
\begin{enumerate}[label = {(\roman*)}]
\item $\PROFIT{N} \geqslant \left(1-\frac{\EPS}{3}\right) \cdot \OPT(\SIGMA)$, and \label{item:N_compared_to_opt}
\item $g_{\EPS} \cdot \PROFIT{\GROUP{\bm{{\hat{f}}},\EPS}{1}} \geqslant \left(1-\frac{2\cdot\EPS}{3}\right) \cdot \PROFIT{N}$. \label{item:N_compared_to_groups}
\end{enumerate}
Since $\EPS \in (0,1)$, verifying~\ref{item:N_compared_to_opt} and~\ref{item:N_compared_to_groups}, implies that
\begin{align*}
g_{\EPS} \cdot \PROFIT{\GROUP{\bm{{\hat{f}}},\EPS}{1}} \geqslant \left(1-\frac{2\cdot\EPS}{3}\right) \cdot \left(1-\frac{\EPS}{3}\right) \cdot \OPT(\SIGMA) \geqslant (1-\EPS) \cdot \OPT(\SIGMA).
\end{align*}
\textbf{Construction of $\bm{{N}}$:}
Initially, let $N$ be a copy of $\OPT[\SIGMA]$.
Since $\OPT$ is non-wasteful, all bins in $\OPT[\SIGMA]$ are filled according to non-wasteful bin types.
For each non-wasteful bin type $t \in \mathcal{T}_S$, remove between $0$ and $g_{\EPS}$ bins of type $t$ from $N$, such that the number of bins of type $t$ is divisible by $g_{\EPS}$.
 
\textbf{Proof of~\ref{item:N_compared_to_opt}:}
Since $\OPT(\SIGMA) \geqslant \frac{n^\sigma}{\tau_S^m}$, Lemma~\ref{lem:num_completed_groups} implies that
\begin{align*}
\PROFIT{N} &\geqslant \OPT(\SIGMA) - (g_{\EPS}-1) \cdot \tau_S \geqslant \OPT(\SIGMA) - \frac{n^\sigma}{m_{\EPS}} \cdot \tau_S \\
&\geqslant \OPT(\SIGMA) - \OPT(\SIGMA) \cdot \frac{\tau_S \cdot \tau_S^m}{m_{\EPS}} \geqslant \left(1-\frac{\EPS}{3}\right) \cdot \OPT(\SIGMA).
\end{align*}
\textbf{Proof of~\ref{item:N_compared_to_groups}:}
Since the number of occurrences of each bin type in $N$ is divisible by $g_{\EPS}$, we may consider $N$ as $g_{\EPS}$ copies of a smaller covering $\overline{N}$.
Since we do not add any items when creating $N$, and thus $\overline{N}$, we get that $n_i^{\overline{N}} \leqslant \left\lfloor \frac{n_i^\SIGMA}{g_{\EPS}} \right\rfloor$, for all $i \in [k]$. 
Then, for all $i \in [k]$,
\begin{align*}
n_i^{\overline{N}} \leqslant \left\lfloor \frac{n_i^\SIGMA}{g_{\EPS}} \right\rfloor \leqslant \left\lfloor \frac{n_i^\SIGMA}{\left\lfloor \frac{n^\SIGMA}{m_{k,\EPS}}\right\rfloor} \right\rfloor \leqslant\left\lfloor \frac{n_i^\SIGMA}{\frac{n^\SIGMA}{m_{k,\EPS}} - 1} \right\rfloor = \left\lfloor \frac{n_i^\SIGMA }{\frac{n^\SIGMA - m_{k,\EPS}}{m_{k,\EPS}}} \right\rfloor = \left\lfloor n_i^\SIGMA \cdot \frac{m_{k,\EPS}}{n^\SIGMA - m_{k,\EPS}} \right\rfloor.
\end{align*}
Using that $\frac{m_{k,\EPS}}{n^\SIGMA + m_{k,\EPS}} = \frac{m_{k,\EPS}}{n^\SIGMA} + \frac{m_{k,\EPS}^2}{n^\SIGMA \cdot (n^\SIGMA - m_{k,\EPS})}$, and that $n^\SIGMA > m_{k,\EPS}^2 +m_{k,\EPS}$,
\begin{align*}
n_i^{\overline{N}} \leqslant \left\lfloor \frac{n_i^\SIGMA \cdot m_{k,\EPS}}{n^\SIGMA} + \frac{m_{k,\EPS}^2}{n^\SIGMA - m_{k,\EPS}} \right\rfloor \leqslant \left\lfloor \frac{n_i^\SIGMA \cdot m_{k,\EPS}}{n^\SIGMA} \right\rfloor + 1 = \lfloor f_i \cdot m_{k,\EPS} \rfloor + 1.
\end{align*}
Hence, $\overline{N}$ contains at most one more item of size $s_i$ than $\GROUP{\bm{{\hat{f}}},\EPS}{j}$, for all $i \in [k]$, and all $j \in [g_{\EPS}]$.
Then, for all $j \in [g_{\EPS}]$,
\begin{align}\label{eq:bound_profit_G}
\PROFIT{\GROUP{\bm{{\hat{f}}},\EPS}{j}} \geqslant \PROFIT{\overline{N}} - k.
\end{align}
Next, we lower bound $\PROFIT{\overline{N}}$.
Since $\OPT(\SIGMA) \geqslant \frac{n^{\sigma}}{\tau_S^m}$,
\begin{align*}
\PROFIT{\overline{N}} &= \frac{\PROFIT{N}}{g_{\EPS}} \geqslant \frac{\left(1-\frac{\EPS}{3}\right) \cdot \OPT(\SIGMA)}{g_{\EPS}} \geqslant \frac{\left(1-\frac{\EPS}{3}\right) \cdot n^\SIGMA}{\tau_S^m \cdot g_{\EPS}} \geqslant \frac{\left(1-\frac{\EPS}{3}\right) \cdot n^\SIGMA}{\tau_S^m \cdot \frac{n^\SIGMA}{m_{\EPS}}} \\
&= \frac{\left(1-\frac{\EPS}{3}\right) \cdot m_{\EPS}}{\tau_S^m} \geqslant \frac{\left(1-\frac{\EPS}{3}\right) \cdot \frac{3 \cdot \tau_S \cdot \tau_S^m}{\EPS}}{\tau_S^m} \geqslant \frac{\left(1-\frac{\EPS}{3}\right) \cdot 3 \cdot \tau_S}{\EPS} \geqslant \frac{\left(1-\frac{\EPS}{3}\right) \cdot k}{\frac{\EPS}{3}}.
\end{align*}
Hence, $k \leqslant \frac{\frac{\EPS}{3} \cdot \PROFIT{\overline{N}}}{1-\frac{\EPS}{3}}$, and so, by Equation~\eqref{eq:bound_profit_G},
\begin{align*}
\PROFIT{\GROUP{\bm{{\hat{f}}},\EPS}{j}}  \geqslant \PROFIT{\overline{N}} - \frac{\frac{\EPS}{3} \cdot \PROFIT{\overline{N}}}{1-\frac{\EPS}{3}} \geqslant \left(1-\frac{2\cdot\EPS}{3}\right) \cdot \PROFIT{\overline{N}}.
\end{align*}
Since $\PROFIT{N} = g_{\EPS} \cdot \PROFIT{\overline{N}}$ and $\PROFIT{\GROUP{\bm{{\hat{f}}},\EPS}{j}} = \PROFIT{\GROUP{\bm{{\hat{f}}},\EPS}{1}}$, for all $j \in [g_{\EPS}]$,
\begin{align*}
g_{\EPS} \cdot \PROFIT{\GROUP{\bm{{\hat{f}}},\EPS}{1}} \geqslant g_{\EPS} \cdot \left(1-\frac{2\cdot\EPS}{3}\right) \cdot \PROFIT{\overline{N}} = \left(1-\frac{2\cdot\EPS}{3}\right)  \cdot \PROFIT{N}.
\end{align*}
\end{proof}

We are now ready to state and prove the main result of this section.

\begin{theorem}\label{thm:consistency_of_profilecovering}
For any set $S = \{s_1,s_2,\ldots,s_k\} \subseteq (0,1]$, and any $\EPS \in (0,1)$, there exists a constant, $b$, such that for all instances $(\SIGMA,\bm{{\hat{f}}})$, with $\bm{{f}} = \bm{{\hat{f}}}$,
\begin{align*}
\PC{\EPS}(\SIGMA,\bm{{\hat{f}}}) \geqslant (1 - \EPS) \cdot \OPT(\SIGMA) - b.
\end{align*}
That is, $\PC{\EPS}$ is a $(1-\EPS)$-consistent algorithm for $\DBC{S}{\mathcal{F}}$.
\end{theorem}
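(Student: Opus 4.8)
The plan is to reduce the statement to Lemma~\ref{lem:profit_groups_perfect_predictions} by splitting on the length $n = \abs{\SIGMA}$, letting the additive constant $b$ absorb all short instances. Concretely, I would take $b := m_{k,\EPS}^2 + m_{k,\EPS}$, which depends only on $S$ and $\EPS$ (through $\tau_S$, $\tau_S^m$, and $k$) and not on $\SIGMA$, and then verify the inequality separately for long and short inputs.

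\textbf{Long instances, $n > m_{k,\EPS}^2 + m_{k,\EPS}$.} The first step is the reduction $\PC{\EPS}(\SIGMA,\bm{{\hat{f}}}) \geqslant g_{\EPS}\cdot\PROFIT{\GROUP{\bm{{\hat{f}}},\EPS}{1}}$. By definition of $g_{\EPS}$, the run of $\PC{\EPS}$ on $(\SIGMA,\bm{{\hat{f}}})$ completes exactly $g_{\EPS}$ groups; by the remark preceding Lemma~\ref{lem:profit_groups_perfect_predictions} each of them contributes profit $\PROFIT{\GROUP{\bm{{\hat{f}}},\EPS}{1}}$, and the bins of the at most one incomplete group together with the extra bins contribute a nonnegative amount, which gives the reduction. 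Since $\bm{{\hat{f}}} = \bm{{f}}$ and $n > m_{k,\EPS}^2 + m_{k,\EPS}$, Lemma~\ref{lem:profit_groups_perfect_predictions} applies and yields $g_{\EPS}\cdot\PROFIT{\GROUP{\bm{{\hat{f}}},\EPS}{1}} \geqslant (1-\EPS)\cdot\OPT(\SIGMA) \geqslant (1-\EPS)\cdot\OPT(\SIGMA) - b$, so on long instances the bound even holds with additive constant $0$.

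\textbf{Short instances, $n \leqslant m_{k,\EPS}^2 + m_{k,\EPS}$.} Every covered bin contains at least one item, so $\OPT(\SIGMA) \leqslant n \leqslant m_{k,\EPS}^2 + m_{k,\EPS} = b$, and therefore $(1-\EPS)\cdot\OPT(\SIGMA) - b \leqslant \OPT(\SIGMA) - b \leqslant 0 \leqslant \PC{\EPS}(\SIGMA,\bm{{\hat{f}}})$, so the inequality is trivial. Combining the two cases establishes the theorem with $b = m_{k,\EPS}^2 + m_{k,\EPS}$, and the final sentence about $(1-\EPS)$-consistency is then just the definition of $\CR{\PC{\EPS}}$ unwound.

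The substantive work — comparing the groups to $\OPT$ and to the rounded subinstance $\SIGMA_{\on{sub}}$ — is already contained in Lemma~\ref{lem:profit_groups_perfect_predictions}, so the only genuine obstacle left here is making the reduction step airtight against the pseudocode of Algorithm~\ref{alg:profile_covering}: one must check that the completed groups are genuine, pairwise identical copies of $\PLACEHOLDERS{\bm{{\hat{f}}},\EPS}$ (hence of equal profit) and that no bin is double-counted. A minor point worth spelling out in passing is that, even under perfect predictions, an item of a size $s_i$ with $\lfloor f_i\cdot m_{k,\EPS}\rfloor = 0$ is routed to an extra bin via $\DNF$ rather than into a group; since $\DNF$ never yields negative profit, such bins can simply be dropped from the lower bound and the reduction still goes through.
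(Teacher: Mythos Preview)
Your proposal is correct and follows essentially the same route as the paper: set $b = m_{k,\EPS}^2 + m_{k,\EPS}$, dispose of short instances trivially since $\OPT(\SIGMA)\leqslant n\leqslant b$, and for long instances invoke Lemma~\ref{lem:profit_groups_perfect_predictions} via the reduction $\PC{\EPS}(\SIGMA,\bm{{\hat{f}}}) \geqslant g_{\EPS}\cdot\PROFIT{\GROUP{\bm{{\hat{f}}},\EPS}{1}}$. Your added remarks about the reduction step and the handling of sizes with $\lfloor f_i\cdot m_{k,\EPS}\rfloor = 0$ are sound and simply make explicit what the paper leaves implicit.
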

\begin{proof}
We set $b = m_{k,\EPS}^2 + m_{k,\EPS}$.
If $\abs{\SIGMA} \leqslant m_{k,\EPS}^2 + m_{k,\EPS}$, the right-hand size is non-positive, and the left-hand side is non-negative, and so we are done.
On the other hand, if $\abs{\SIGMA} > m_{k,\EPS}^2 + m_{k,\EPS}$, then, by Lemma~\ref{lem:profit_groups_perfect_predictions}, \\
\begin{minipage}{\textwidth}    
\begin{align*}
\PC{\EPS}(\SIGMA,\bm{{\hat{f}}}) \geqslant g_{\EPS} \cdot \PROFIT{\GROUP{\bm{{\hat{f}}},\EPS}{1}} \geqslant (1-\EPS) \cdot \OPT(\SIGMA). 
\end{align*}
\end{minipage} \vspace*{-.2cm}
\end{proof}

Consider the instance $(\SIGMA^n,\bm{{\hat{f}}})$ where $\SIGMA^n = \left\langle \frac{1}{k} \right\rangle^n$ and $\bm{{\hat{f}}}$ predicts that half of the items are of size $\frac{1}{k}$, and half of the items are of size $\frac{k-1}{k}$, a wrong prediction for $\SIGMA^n$.
Based on the predictions $\bm{{\hat{f}}}$, $\PC{\EPS}$ creates $\left\lfloor \frac{m_{k,\EPS}}{2} \right\rfloor$ bins that contain placeholders for one item of size $\frac{1}{k}$, and one item of size $\frac{k-1}{k}$.
Since no item of size $\frac{k-1}{k}$ arrives, $\PC{\EPS}$ never covers a bin, and since $\OPT(\SIGMA^n) = \left\lfloor \frac{n}{k} \right\rfloor$, $\PC{\EPS}$ is not robust.
Hence, $\PC{\EPS}$ is not a robust algorithm for $\DBC{k}{\mathcal{F}}$.
In the next section, we create a strategy for improving the robustness of $\PC{\EPS}$.

\subsection{Robustifying $\PC{\EPS}$}
% \shahin{To discuss: for bin covering, I think $\ALG$ must be DualNextFit. For bin packing, we made it a generic algorithm because (i) the best online algorithm is still unknown. (ii) we wanted to do experiments and FirstFit/BestFit are better on real-data than algorithms with better worst-case guarantees. In this work, none of these cases apply. I cannot see why Alg should not be DNF. }
For each purely online algorithm, $\ALG$ ($\ALG$ could be $\DNF$), we create a family of \emph{hybrid algorithms} that merges $\PC{\EPS}$ with $\ALG$ to improve on the robustness of $\PC{\EPS}$.
Formally, for a fixed algorithm, $\ALG$, we create the family $\{\HY{\ALG}{\TL}{\EPS}\}_{\TL,\EPS}$, of hybrid algorithms, parametrized by $\EPS \in (0,1)$ and a \emph{trust level}, $\TL = \frac{\KAPPA}{\LAMBDA} \in \QQ^+$.
For our applications, we assume that $\TL$ is given as a fraction, $\TL = \frac{\KAPPA}{\LAMBDA}$.
For a fixed purely online algorithm, $\ALG$, $\TL \in \QQ^+$, and $\EPS \in (0,1)$, we give the pseudo-code for $\HY{\ALG}{\TL}{\EPS}$ in Algorithm~\ref{alg:hybrid}.

In words, upon receiving an item $a$, $\HY{\ALG}{\TL}{\EPS}$ counts the number of occurences of $a$, denoted $c_a$, and if $c_a\ (\on{mod}\, \LAMBDA) \leqslant \LAMBDA - \KAPPA - 1$, we ask $\ALG$ to place $a$ in a bin, that only $\ALG$ places items into, and otherwise, we ask $\PC{\EPS}$ to place $a$ in a bin that only $\PC{\EPS}$ places items into.

\begin{algorithm}
\caption{$\HY{\ALG}{\TL}{\EPS}$}
\begin{algorithmic}[1]
\STATE \textbf{Input:} An instance for $\DBC{S}{\mathcal{F}}$, $(\SIGMA,\bm{{\hat{f}}})$
\STATE Determine $\KAPPA,\LAMBDA \in \ZZ^+$ such that $\TL = \frac{\KAPPA}{\LAMBDA}$
\STATE Run Lines~\ref{line:pc_initialization_start}-\ref{line:pc_initialization_end} of $\PC{\EPS}$ (see Algorithm~\ref{alg:profile_covering}), given the prediction $\bm{{\hat{f}}}$ 
\STATE Run initialization part of $\ALG$, if such exists
\FORALL {$i \in [k]$}
	\STATE $c_{s_i} \leftarrow 0$
\ENDFOR
\WHILE {receiving items, $a$,}
	\STATE $j \leftarrow c_{a}\ (\on{mod}\, \LAMBDA)$ \COMMENT{$a \in \SIGMA_{j+1}$}
	\IF {$j \leqslant \LAMBDA - \KAPPA - 1$}
		\STATE Ask $\ALG$ to distribute $a$
	\ELSE[$\LAMBDA - \KAPPA \leqslant j \leqslant \LAMBDA - 1$]
		\STATE Ask $\PC{\EPS}$ to distribute $a$ \COMMENT{See Lines~\ref{line:pc_distribution_start}-\ref{line:pc_distribution_end} in Algorithm~\ref{alg:profile_covering}}
	\ENDIF
	\STATE $c_{a} \leftarrow c_{a} + 1$
\ENDWHILE
\end{algorithmic}
\label{alg:hybrid}
\end{algorithm}

Towards the analysis of $\HY{\ALG}{\TL}{\EPS}$, we associate, to any instance, $\sigma$, of $\DBC{S}{}$, a $(\LAMBDA+1)$-tuple, $(\SIGMA_1,\SIGMA_2,\ldots,\SIGMA_\LAMBDA,\SIGMA_e) = \LS(\SIGMA)$ (see Algorithm~\ref{alg:lambda_splitting}) called the \emph{$\LAMBDA$-splitting of $\SIGMA$}.

In words, the $\LAMBDA$-splitting of $\SIGMA$ is created by, for each $s_i \in S$, counting the number of items in $\SIGMA$ of size $s_i$. 
Processing the items one-by-one, in the order they appear in $\SIGMA$, then, when processing an item, $a$, we place $a$ in $\SIGMA_{i+1}$ if $c_{a}\ (\on{mod}\, \LAMBDA) \equiv i$.
After processing all items in $\SIGMA$, then, for all $s_i \in S$, we compute the number of items of size $s_i$ in each $\SIGMA_j$, for $i \in [k]$ and $j \in [\LAMBDA]$.
If there are equally many items of size $s_i$ in all $\SIGMA_j$, we are done. 
If, on the other hand, there exists some $i \in [k]$ and some $j \in [\LAMBDA]$ such that $\SIGMA_1,\SIGMA_2,\ldots,\SIGMA_j$ contains one more item of size $s_i$ than $\SIGMA_{j+1},\SIGMA_{j+2},\ldots,\SIGMA_{\LAMBDA}$, then we remove one item of size $s_i$ from all of $\SIGMA_1,\SIGMA_2,\ldots,\SIGMA_j$, and place it in $\SIGMA_e$ instead.

By construction, the $\LAMBDA$-splitting of $\SIGMA$ decomposes $\SIGMA$ into $\LAMBDA$ smaller instances, $\SIGMA_i$ for $i \in [\LAMBDA]$, that all contain the same multiset of items, but possibly in different orders, and an \emph{excess} instance $\SIGMA_e$, which contain the remaining items from $\SIGMA$.
By construction, $\abs{\SIGMA_e} \leqslant (\LAMBDA - 1) \cdot k$.

\begin{algorithm}
\caption{$\LS$}
\begin{algorithmic}[1]
\STATE \textbf{Input:} an instance for $\DBC{S}{}$, $\SIGMA$
\STATE $\SIGMA_e \leftarrow \emptyset$, and $\SIGMA_i \leftarrow \emptyset$, for all $i \in [\LAMBDA]$
\STATE $c_{s_i} \leftarrow 0$, for all $s_i \in S$.
\FOR {each item, $a \in \SIGMA$,}
	\STATE $i \leftarrow c_{a}\, (\on{mod}\, \LAMBDA)$
	\STATE $\SIGMA_{i+1} \leftarrow \SIGMA_{i+1} \cup \{a\}$
	\STATE $c_{a} \leftarrow c_{a} + 1$
\ENDFOR
\FOR {each $s_i \in S$} \label{line:start_check_lambda_splitting}
	\STATE $j \leftarrow c_{s_i}\, (\on{mod}\, \LAMBDA)$
	\IF {$j \neq 0$}
		\FOR[$\SIGMA_1,\SIGMA_2,\ldots,\SIGMA_j$ contain too many $s_i$'s] {$l=1,2,\ldots,j$}
			\STATE Remove the last item of size $s_i$ from $\SIGMA_l$ an add it to $\SIGMA_e$ \label{line:end_check_lambda_splitting}
		\ENDFOR
	\ENDIF
\ENDFOR
\STATE \textbf{Return:} $(\SIGMA_1,\SIGMA_2,\ldots,\SIGMA_\LAMBDA,\SIGMA_e)$.
\end{algorithmic}
\label{alg:lambda_splitting}
\end{algorithm}

\paragraph{Bounding the Performance of $\OPT$}

Throughout, given $\LAMBDA$ instances, $\SIGMA_1,\SIGMA_2,\ldots,\SIGMA_{\LAMBDA}$, we set $\bigcup_{i=1}^{\LAMBDA} \SIGMA_i = \langle \SIGMA_1, \SIGMA_2, \ldots, \SIGMA_{\LAMBDA} \rangle$.

\begin{observation}\label{lem:opt_on_split_instances_1}
Let $\SIGMA_1,\SIGMA_2,\ldots,\SIGMA_\LAMBDA$ be any instances for $\DBC{S}{}$, then 
\begin{align*}
\sum_{i=1}^\LAMBDA \OPT(\SIGMA_i) \leqslant \OPT\left( \bigcup_{i=1}^\LAMBDA \SIGMA_i \right).
\end{align*}
\end{observation}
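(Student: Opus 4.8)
The plan is to exploit the fact that $\OPT$ is an \emph{offline} optimum, so $\OPT(\cdot)$ depends only on the multiset of items in an instance, not on their arrival order; in particular $\OPT\left(\bigcup_{i=1}^{\LAMBDA}\SIGMA_i\right)$ is the offline optimum for the multiset that is the disjoint union of the multisets underlying $\SIGMA_1,\ldots,\SIGMA_{\LAMBDA}$. First I would fix, for each $i\in[\LAMBDA]$, an optimal solution $\OPT[\SIGMA_i]$, viewed as a partition of the items of $\SIGMA_i$ into bins, of which exactly $\OPT(\SIGMA_i)$ have level at least $1$.

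Next I would assemble a single candidate solution $N$ for $\bigcup_{i=1}^{\LAMBDA}\SIGMA_i$ by taking the union of all the bins appearing in $\OPT[\SIGMA_1],\ldots,\OPT[\SIGMA_{\LAMBDA}]$ (relabelling bins so that bins from different $\OPT[\SIGMA_i]$ are distinct). Since the $\SIGMA_i$ together use each item of $\bigcup_{i=1}^{\LAMBDA}\SIGMA_i$ exactly once, and each $\OPT[\SIGMA_i]$ partitions the items of $\SIGMA_i$, the collection $N$ is a valid partition of all items of $\bigcup_{i=1}^{\LAMBDA}\SIGMA_i$ into bins. The level of every bin is unchanged by this relabelling, so the number of covered bins in $N$ is exactly $\sum_{i=1}^{\LAMBDA}\OPT(\SIGMA_i)$, i.e.\ $\PROFIT{N} = \sum_{i=1}^{\LAMBDA}\OPT(\SIGMA_i)$.

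Finally, since $\OPT$ maximizes the number of covered bins over all partitions of the items of $\bigcup_{i=1}^{\LAMBDA}\SIGMA_i$, we get $\OPT\left(\bigcup_{i=1}^{\LAMBDA}\SIGMA_i\right) \geqslant \PROFIT{N} = \sum_{i=1}^{\LAMBDA}\OPT(\SIGMA_i)$, which is the claim. There is no real obstacle here; the only point worth stating carefully is that $\OPT$ is order-independent (and ignores any prediction), so that gluing together the per-instance optimal partitions yields an admissible solution for the concatenated instance — everything else is bookkeeping.
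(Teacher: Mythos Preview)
Your argument is correct and is exactly the standard justification the paper has in mind; the paper simply states this as an Observation without proof, since gluing together the optimal coverings of the $\SIGMA_i$ into a feasible covering of $\bigcup_{i=1}^{\LAMBDA}\SIGMA_i$ is immediate.
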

%\begin{proof}
%Let $C$ be the solution $\bigcup_{i=1}^\LAMBDA \OPT[\SIGMA_i]$ to the instance $\bigcup_{i=1}^\LAMBDA %\SIGMA_i$.
%Then, by the optimality of $\OPT$, $\sum_{i=1}^\LAMBDA \OPT(\SIGMA_i) = \PROFIT{C} \leqslant \OPT\left( %\bigcup_{i=1}^\LAMBDA \SIGMA_i \right)$.
%\end{proof}

\begin{lemma}\label{lem:bound_for_opt_on_split_instances}
Let $S = \{s_1,s_2,\ldots,s_k\} \subseteq (0,1]$ be any finite set, let $\SIGMA$ by any instance of $\DBC{S}{}$, and let $(\SIGMA_1,\SIGMA_2,\ldots,\SIGMA_\LAMBDA,\SIGMA_e)$ be the $\LAMBDA$-splitting of $\SIGMA$.
Then,
\begin{align*}
\OPT(\SIGMA) \leqslant \sum_{i=1}^\LAMBDA \OPT(\SIGMA_i)  + (\LAMBDA - 1) \cdot (k + \tau_S).
\end{align*}
\end{lemma}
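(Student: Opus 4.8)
The plan is to route the argument through the concatenation $\bigcup_{i=1}^{\LAMBDA}\SIGMA_i$ as an intermediate instance: first pass from $\SIGMA$ to it by discarding the excess items $\SIGMA_e$, and then pass from it to the individual pieces $\SIGMA_i$ by ``evening out'' an optimal covering. For the first step I would recall that, by construction of the $\LAMBDA$-splitting, the multiset of items of $\SIGMA$ is exactly the (multiset) union of the items of $\SIGMA_1,\dots,\SIGMA_{\LAMBDA}$ together with those of $\SIGMA_e$, and that $\OPT$ is insensitive to the order in which items arrive. Deleting a single item from an instance lowers $\OPT$ by at most $1$: remove that item from its bin in an optimal covering, and at most one bin ceases to be covered. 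Iterating over the at most $(\LAMBDA-1)\cdot k$ items of $\SIGMA_e$ then gives $\OPT(\SIGMA)\leqslant\OPT\bigl(\bigcup_{i=1}^{\LAMBDA}\SIGMA_i\bigr)+(\LAMBDA-1)\cdot k$.

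For the second step, note that the $\SIGMA_i$ are pairwise equal as multisets; let $m_j$ denote the number of copies of $s_j$ in each of them, so $\bigcup_{i=1}^{\LAMBDA}\SIGMA_i$ contains $\LAMBDA\cdot m_j$ copies of $s_j$. Since $\OPT$ is non-wasteful, $\OPT\bigl[\bigcup_{i=1}^{\LAMBDA}\SIGMA_i\bigr]$ is a collection of bins, each of some type $t\in\mathcal{T}_S$; write $c_t$ for the number of bins of type $t$ and $c_t=\LAMBDA\cdot q_t+r_t$ with $0\leqslant r_t\leqslant\LAMBDA-1$. I would discard $r_t$ bins of type $t$ for every $t$; since $\abs{\mathcal{T}_S}=\tau_S$, this removes at most $(\LAMBDA-1)\cdot\tau_S$ bins, hence at most that many covered ones. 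The remaining covering has $\LAMBDA\cdot q_t$ bins of each type $t$, so it splits into $\LAMBDA$ sub-collections $N_1,\dots,N_{\LAMBDA}$, each consisting of $q_t$ bins of type $t$ for every $t$. For each size $s_j$, the covering $\OPT\bigl[\bigcup_i\SIGMA_i\bigr]$ uses at most $\LAMBDA\cdot m_j$ copies of $s_j$ (there are no more), hence so do the bins surviving the discards, and by symmetry each $N_i$ uses at most $m_j$ copies of $s_j$. Thus $N_i$ is a feasible covering of a sub-multiset of $\SIGMA_i$, so $\PROFIT{N_i}\leqslant\OPT(\SIGMA_i)$; summing over $i$ and accounting for the at most $(\LAMBDA-1)\cdot\tau_S$ covered bins lost to the discards yields $\OPT\bigl(\bigcup_{i=1}^{\LAMBDA}\SIGMA_i\bigr)\leqslant\sum_{i=1}^{\LAMBDA}\OPT(\SIGMA_i)+(\LAMBDA-1)\cdot\tau_S$.

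Chaining the two displayed inequalities gives $\OPT(\SIGMA)\leqslant\sum_{i=1}^{\LAMBDA}\OPT(\SIGMA_i)+(\LAMBDA-1)\cdot(k+\tau_S)$, as claimed. The step I expect to require the most care is the splitting in the second part: one must check that after deleting the ``remainder'' bins each evened-out sub-collection $N_i$ is genuinely realizable inside a single $\SIGMA_i$ — i.e.\ that for every $j$ the number of copies of $s_j$ used by $N_i$, namely $q_t$ summed against the multiplicity of $s_j$ in $t$, does not exceed $m_j$ — which is precisely the per-size accounting sketched above. Step 1 and the final arithmetic are routine.
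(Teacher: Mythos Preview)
Your proposal is correct and follows essentially the same approach as the paper: both route through the intermediate instance $\bigcup_{i=1}^{\LAMBDA}\SIGMA_i$, first bounding $\OPT(\SIGMA)\leqslant\OPT\bigl(\bigcup_i\SIGMA_i\bigr)+(\LAMBDA-1)\cdot k$ by discarding $\SIGMA_e$, and then bounding $\OPT\bigl(\bigcup_i\SIGMA_i\bigr)\leqslant\sum_i\OPT(\SIGMA_i)+(\LAMBDA-1)\cdot\tau_S$ by deleting at most $\LAMBDA-1$ bins of each type so the remaining covering splits into $\LAMBDA$ identical sub-coverings, each realizable in a single $\SIGMA_i$. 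Your per-size accounting for the feasibility of each $N_i$ is exactly what the paper records as $n_j^{\overline{N}}\leqslant n_j^{\SIGMA_i}$.
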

% \shahin{We probably don't need to repeat what S is in the statement of lemma (for saving space). You can replace the first line with ``For any input sequence $\sigma$ defined over item set $S$, ..."}

\begin{proof}
We split this proof into two parts, by showing that
\begin{enumerate}[label = {(\roman*)}]
\item $\OPT\left( \bigcup_{i=1}^\LAMBDA \SIGMA_i \right) \leqslant \sum_{i=1}^{\LAMBDA} \OPT(\SIGMA_i) + (\LAMBDA - 1) \cdot \tau_S$, and \label{item:opt_on_split_instances_part_1}
\item $\OPT(\SIGMA) \leqslant \OPT\left( \bigcup_{i=1}^\LAMBDA \SIGMA_i \right) + (\LAMBDA -1) \cdot k$. \label{item:opt_on_split_instances_part_2}
\end{enumerate}
\textbf{Proof of~\ref{item:opt_on_split_instances_part_1}:}
We use a similar strategy as in the proof of Theorem~\ref{thm:consistency_of_profilecovering}.
To this end, let $N$ be the solution obtained by removing at most $\LAMBDA-1$ bins of each non-wasteful bin type from a copy of $\OPT\left[ \bigcup_{i=1}^\LAMBDA \SIGMA_i \right]$ (recall that $\OPT$ is non-wasteful) such that the number of each bin type in $N$ is divisible by $\LAMBDA$. 
Then,
\begin{align*}
\PROFIT{N} \geqslant \OPT\left( \bigcup_{i=1}^\LAMBDA \SIGMA_i \right) - (\LAMBDA-1) \cdot \tau_S.
\end{align*}
Hence, it remains to compare the profit of $\bigcup_{i=1}^\LAMBDA \OPT[\SIGMA_i]$ to $\PROFIT{N}$.
Since $\SIGMA_1,\SIGMA_2,\ldots,\SIGMA_\LAMBDA$ all contain the same multiset of items (but possibly in a different order), we get that $\OPT(\SIGMA_i) = \OPT(\SIGMA_j)$, for all $i,j \in [\LAMBDA]$.
Further, by construction, $N$ is the union of $\LAMBDA$ identical smaller coverings, $\overline{N}$, for which $n_i^{\overline{N}} \leqslant n_i^{\SIGMA_i}$, for all $i \in [k]$.
Therefore, $\OPT(\SIGMA_i) \geqslant \PROFIT{\overline{N}}$, for all $i \in [k]$, and we can write
\begin{align*}
\sum_{i=1}^\LAMBDA \OPT(\SIGMA_i) = \LAMBDA \cdot \OPT(\SIGMA_1) \geqslant \LAMBDA \cdot \PROFIT{\overline{N}} = \PROFIT{N},
\end{align*}
which completes the proof of~\ref{item:opt_on_split_instances_part_1}.

\textbf{Proof of~\ref{item:opt_on_split_instances_part_2}:}
Since $\abs{\SIGMA_e} \leqslant (\LAMBDA -1)\cdot k$, we can write
\begin{align*}
\OPT\left( \bigcup_{i=1}^\LAMBDA \SIGMA_i \right) \geqslant \OPT(\SIGMA) - (\LAMBDA-1)\cdot k.
\end{align*}
Adding $(\LAMBDA-1)\cdot k$ to both sides verifies~\ref{item:opt_on_split_instances_part_2}, and thus completes the proof.
\end{proof}

\paragraph{Bound on the Performance of $\PC{\EPS}$}
We compare the number of bins covered by $\PC{\EPS}$ on a subset of the instances in the $\LAMBDA$-splitting of an instance, $\SIGMA$, to that of $\OPT$ on $\SIGMA$.
To this end, observe that if $\SIGMA$ is a $\DBC{S}{}$-instance, where $S = \{s_1,s_2,\ldots,s_k\} \subseteq (0,1]$, and $(\SIGMA_1,\SIGMA_2,\ldots,\SIGMA_\LAMBDA,\SIGMA_e)$ is the $\LAMBDA$-splitting of $\SIGMA$, then $n_j^{\SIGMA_i} = \left\lfloor \frac{n_j^\SIGMA}{\LAMBDA}\right\rfloor$, for all $j \in [k]$ and all $i \in [\LAMBDA]$ (see Lines~\ref{line:start_check_lambda_splitting}--\ref{line:end_check_lambda_splitting} in Algorithm~\ref{alg:lambda_splitting}).

\begin{lemma}\label{lem:PC_on_split_instances}
Fix any set $S = \{s_1,s_2,\ldots,s_k\} \subseteq (0,1]$, any $\EPS \in (0,1)$, and any instance $(\SIGMA,\bm{{\hat{f}}})$ of $\DBC{S}{}$, for which $\bm{{f}} = \bm{{\hat{f}}}$, and let $(\SIGMA_1,\SIGMA_2,\ldots,\SIGMA_\LAMBDA,\SIGMA_e)$ be the $\LAMBDA$-splitting of $\SIGMA$, for some $\LAMBDA \in \ZZ^+$.
Then, for any $j \in \ZZ^+$, with $j \leqslant \LAMBDA$, there exists a constant $b$ such that
\begin{align*}
\PC{\EPS}\left( \left(\bigcup_{i = \LAMBDA - j + 1}^\LAMBDA \SIGMA_i \right) , \bm{{\hat{f}}} \right) \geqslant \frac{j \cdot (1-\EPS) \cdot \OPT(\SIGMA)}{\LAMBDA} - b.
\end{align*}
\end{lemma}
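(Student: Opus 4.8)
The plan is to sidestep Theorem~\ref{thm:consistency_of_profilecovering} and Lemma~\ref{lem:profit_groups_perfect_predictions} on the concatenation $\SIGMA':=\bigcup_{i=\LAMBDA-j+1}^{\LAMBDA}\SIGMA_i$ and instead to count the groups $\PC{\EPS}$ completes on $\SIGMA'$ directly. The obstruction to applying those results verbatim is that $\PC{\EPS}$ is run on $\SIGMA'$ but still receives the prediction $\bm{{\hat{f}}}=\bm{{f}}$ of the \emph{original} sequence $\SIGMA$, and $\bm{{f}}$ is correct for $\SIGMA$ but not, in general, for $\SIGMA'$ (the $\LAMBDA$-splitting preserves item counts only up to rounding). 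What does transfer is that the skeleton $\PLACEHOLDERS{\bm{{\hat{f}}},\EPS}$, and hence the per-group profit $\PROFIT{\GROUP{\bm{{\hat{f}}},\EPS}{1}}$, depends on $\bm{{\hat{f}}}$ and $\EPS$ alone, so it is the same constant on the runs on $\SIGMA$ and on $\SIGMA'$; moreover $\PROFIT{\GROUP{\bm{{\hat{f}}},\EPS}{1}}\leqslant\sum_i\lfloor\hat f_i\cdot m_{k,\EPS}\rfloor\leqslant m_{k,\EPS}$, as a group holds at most $m_{k,\EPS}$ items.

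First I would isolate the counting fact behind Lemma~\ref{lem:num_completed_groups}: writing $q_l:=\lfloor\hat f_l\cdot m_{k,\EPS}\rfloor$ for the number of size-$s_l$ placeholders per group, on \emph{any} input $T$ and for every $l$ with $q_l>0$ each item of size $s_l$ is placed into a size-$s_l$ placeholder (never an extra bin), the $G_1$-placeholders of size $s_l$ are filled before the $G_2$-placeholders of size $s_l$, and so on, and a group counts as completed once all of its placeholders across all sizes are filled; hence $\PC{\EPS}$ completes at least $\min_{l\,:\,q_l>0}\lfloor n_l^{T}/q_l\rfloor$ groups on $T$ (the index set is non-empty since $\sum_l q_l\geqslant m_{k,\EPS}-k=m_{\EPS}\geqslant 1$). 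Applying this with $T=\SIGMA'$, where $n_l^{\SIGMA'}=j\cdot\lfloor n_l^{\SIGMA}/\LAMBDA\rfloor\geqslant\tfrac{j}{\LAMBDA}n_l^{\SIGMA}-j$, and using $\lfloor x\rfloor\geqslant x-1$ and $q_l\geqslant 1$, I get for every $l$ with $q_l>0$
\begin{align*}
\left\lfloor\frac{n_l^{\SIGMA'}}{q_l}\right\rfloor\geqslant\frac{j}{\LAMBDA}\cdot\frac{n_l^{\SIGMA}}{q_l}-j-1 .
\end{align*}
I would then bring in the run of $\PC{\EPS}$ on $\SIGMA$ itself, which completes $g_{\EPS}$ groups: completion of the $g_{\EPS}$-th of these forces $g_{\EPS}\cdot q_l\leqslant n_l^{\SIGMA}$, i.e.\ $n_l^{\SIGMA}/q_l\geqslant g_{\EPS}$ for every $l$ with $q_l>0$. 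Substituting into the display and taking the minimum over $l$ shows that the number $g'_{\EPS}$ of groups $\PC{\EPS}$ completes on $\SIGMA'$ satisfies $g'_{\EPS}\geqslant\tfrac{j}{\LAMBDA}g_{\EPS}-j-1$.

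To close, I would multiply by the nonnegative constant $\PROFIT{\GROUP{\bm{{\hat{f}}},\EPS}{1}}$, using $\PC{\EPS}(\SIGMA',\bm{{\hat{f}}})\geqslant g'_{\EPS}\cdot\PROFIT{\GROUP{\bm{{\hat{f}}},\EPS}{1}}$ and $\PROFIT{\GROUP{\bm{{\hat{f}}},\EPS}{1}}\leqslant m_{k,\EPS}$, to get
\begin{align*}
\PC{\EPS}(\SIGMA',\bm{{\hat{f}}})\geqslant\frac{j}{\LAMBDA}\cdot g_{\EPS}\cdot\PROFIT{\GROUP{\bm{{\hat{f}}},\EPS}{1}}-(j+1)\cdot m_{k,\EPS}.
\end{align*}
If $\abs{\SIGMA}>m_{k,\EPS}^2+m_{k,\EPS}$, then Lemma~\ref{lem:profit_groups_perfect_predictions} (both of whose hypotheses $\bm{{\hat{f}}}=\bm{{f}}$ and $\abs{\SIGMA}>m_{k,\EPS}^2+m_{k,\EPS}$ now hold) replaces $g_{\EPS}\cdot\PROFIT{\GROUP{\bm{{\hat{f}}},\EPS}{1}}$ by $(1-\EPS)\OPT(\SIGMA)$, giving the statement with $b=(j+1)\cdot m_{k,\EPS}$. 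If $\abs{\SIGMA}\leqslant m_{k,\EPS}^2+m_{k,\EPS}$, then $\OPT(\SIGMA)\leqslant m_{k,\EPS}^2+m_{k,\EPS}$, so (using $j\leqslant\LAMBDA$) the claimed right-hand side is non-positive whenever $b\geqslant m_{k,\EPS}^2+m_{k,\EPS}$, while $\PC{\EPS}(\SIGMA',\bm{{\hat{f}}})\geqslant 0$; hence the single choice $b=m_{k,\EPS}^2+(j+2)\cdot m_{k,\EPS}$ works in both regimes. I expect the only genuinely delicate step to be the floor/ceiling bookkeeping yielding $g'_{\EPS}\geqslant\tfrac{j}{\LAMBDA}g_{\EPS}-j-1$, where all additive slack must be kept independent of $n$; the rest is stitching Lemmas~\ref{lem:num_completed_groups} and~\ref{lem:profit_groups_perfect_predictions} together.
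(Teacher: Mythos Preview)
Your proposal is correct and follows essentially the same route as the paper. Both arguments observe that the group template $\PLACEHOLDERS{\bm{{\hat f}},\EPS}$ (and hence $\PROFIT{\GROUP{\bm{{\hat f}},\EPS}{1}}$) is identical on the runs over $\SIGMA$ and over $\SIGMA'$, use $n_l^{\SIGMA}\geqslant g_{\EPS}\cdot q_l$ together with $n_l^{\SIGMA'}\geqslant \tfrac{j}{\LAMBDA}n_l^{\SIGMA}-j$ to lower bound how many groups are (nearly) filled on $\SIGMA'$, and then invoke Lemma~\ref{lem:profit_groups_perfect_predictions} on $\SIGMA$. The only cosmetic difference is the bookkeeping: the paper bounds the profit of $\lfloor \tfrac{j}{\LAMBDA}g_{\EPS}\rfloor$ \emph{almost}-completed groups, losing at most $k\cdot j$ bins for the unfilled placeholders, whereas you bound the number of \emph{fully} completed groups via $g'_{\EPS}\geqslant \tfrac{j}{\LAMBDA}g_{\EPS}-j-1$; this yields a slightly larger additive constant ($(j+1)m_{k,\EPS}$ versus $k\cdot j+m_{k,\EPS}$), but the argument is equally valid.
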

\begin{proof}
Let $\tilde{\SIGMA_j} = \bigcup_{i=\LAMBDA - j + 1}^\LAMBDA \SIGMA_i$, and set $b = m_{k,\EPS}^2 + m_{k,\EPS} + k \cdot j$.
If $\abs{\SIGMA} \leqslant b$, the right-hand side is non-positive, and the left-hand side is non-negative, and the lemma's statement trivially follows. %\shahin{(consider adding a note at the beginning of the paper that $n$ is arbitrarily large; otherwise, all our performance measures can be optimally achieved via trivial algorithms. If we add this note, we don't need to worry about edge-cases like this.)}.

Hence, assume that $\abs{\SIGMA} > b$.
Let $C = \PC{\EPS}[\SIGMA,\bm{{\hat{f}}}]$, and let $g_{\EPS}$ be the number of groups, $\GROUP{\bm{{\hat{f}}},\EPS}{i}$, that $\PC{\EPS}$ completes on instance $(\SIGMA,\bm{{\hat{f}}})$.
By Lemma~\ref{lem:profit_groups_perfect_predictions}, $g_{\EPS} \cdot \PROFIT{\GROUP{\bm{{\hat{f}}},\EPS}{1}} \geqslant (1-\EPS) \cdot \OPT(\SIGMA)$.
Since $\GROUP{\bm{{\hat{f}}},\EPS}{i}$ is only dependent on $\EPS$, $S$, and $\bm{{\hat{f}}}$, $\PC{\EPS}$ creates the same groups, $\GROUP{\bm{{\hat{f}}},\EPS}{i}$, on instance $(\SIGMA,\bm{{\hat{f}}})$ as on instance $(\tilde{\SIGMA_j} ,\bm{{\hat{f}}})$. 
In the following, we prove a lower bound for the number of groups that $\PC{\EPS}$ completes on instance $(\tilde{\SIGMA_j} ,\bm{{\hat{f}}})$, as a function of $g_{\EPS}$.

Since $C$ completely covers $g_{\EPS}$ copies of $\GROUP{\bm{{\hat{f}}},\EPS}{i}$, then $n_i^\SIGMA \geqslant g_{\EPS} \cdot \lfloor f_i^\SIGMA \cdot m_{k,\EPS} \rfloor$ for all $i \in [k]$.
Since each $\SIGMA_i$ contains exactly $\left\lfloor \frac{n_i^\SIGMA}{\LAMBDA} \right\rfloor$ items of size $s_i$, then
\begin{align*}
n_i^{\tilde{\SIGMA}_j} \geqslant \frac{j \cdot n_i^\SIGMA}{\LAMBDA} - j \geqslant \frac{j \cdot g_{\EPS}}{\LAMBDA} \cdot \lfloor f_i^\SIGMA \cdot m_{k,\EPS} \rfloor - j \geqslant \left\lfloor \frac{j \cdot g_{\EPS}}{\LAMBDA} \right\rfloor \cdot \lfloor f_i^\SIGMA \cdot m_{k,\EPS} \rfloor - j.
\end{align*}
This implies that, $\PC{\EPS}$ fills in all placeholders for items of size $s_i$ in $\left\lfloor \frac{j \cdot g_{\EPS}}{\LAMBDA} \right\rfloor$ groups, except at most $j$, on instance $(\tilde{\SIGMA}_j,\bm{{\hat{f}}})$, for all $i \in [k]$.
Hence,
\begin{align*}
\PC{\EPS}(\tilde{\SIGMA}_j,\bm{{\hat{f}}}) \geqslant \left\lfloor \frac{j \cdot g_{\EPS}}{\LAMBDA} \right\rfloor \cdot \PROFIT{\GROUP{\bm{{\hat{f}}},\EPS}{i}} - k \cdot j \geqslant \left( \frac{j \cdot g_{\EPS}}{\LAMBDA} - 1 \right) \cdot \PROFIT{\GROUP{\bm{{\hat{f}}},\EPS}{i}} - k \cdot j.
\end{align*}
Since $\PROFIT{\GROUP{\bm{{\hat{f}}},\EPS}{i}} \leqslant m_{k,\EPS}$, we get that
\begin{align*}
\PC{\EPS}(\tilde{\SIGMA}_j,\bm{{\hat{f}}}) \geqslant \frac{j \cdot g_{\EPS}}{\LAMBDA} \cdot \PROFIT{\GROUP{\bm{{\hat{f}}},\EPS}{i}} - k \cdot j - m_{k,\EPS} \geqslant \frac{j \cdot (1- \EPS) \cdot \OPT(\SIGMA)}{\LAMBDA} - b,
\end{align*}
which completes the proof.
\end{proof}

\subsection{A Trust-Parametrized Family of Hybrid Algorithms}

In what follows, we wrap up the analysis of $\HY{\ALG}{\TL}{\EPS}$ by stating and proving the main results of this section.
By construction, $\HY{\ALG}{\TL}{\EPS}$ (see Algorithm~\ref{alg:hybrid}) distributes the items that arrive between $\PC{\EPS}$ and $\ALG$ in a way determined by $\TL$.
Whenever $\TL$ becomes close to $1$, $\HY{\ALG}{\TL}{\EPS}$ gives a larger fraction of items to $\PC{\EPS}$, and when $\TL$ gets close to $0$, $\HY{\ALG}{\TL}{\EPS}$ gives more items to $\ALG$.
In particular, $\HY{\ALG}{1}{\EPS} = \PC{\EPS}$, and $\HY{\ALG}{0}{\EPS} = \ALG$.
Clearly, $\HY{\ALG}{\TL}{\EPS}$ cannot create a perfect $\LAMBDA$-splitting online, since it cannot correctly identify the items that are placed in $\SIGMA_e$.
It can, however, get sufficiently close.

\begin{theorem}\label{thm:consistency_hybrid}
For any finite set $S = \{s_1,s_2,\ldots,s_k\} \subseteq (0,1]$, any purely online $\DBC{S}{}$-algorithm, $\ALG$, any $c \leqslant \CR{\ALG}$, any $\EPS \in (0,1)$, and any $\TL \in \QQ^+$, there exists a constant $b \in \ZZ^+$, such that for all instances $(\SIGMA,\bm{{\hat{f}}})$ of $\DBC{S}{}$, with $\bm{{f}} = \bm{{\hat{f}}}$,
\begin{align*}
\HY{\ALG}{\TL}{\EPS}(\SIGMA,\bm{{\hat{f}}}) \geqslant (\TL \cdot (1-\EPS) + (1-\TL) \cdot c) \cdot \OPT(\SIGMA) - b.
\end{align*}
\end{theorem}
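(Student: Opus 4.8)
The plan is to decompose the input $\SIGMA$ via its $\LAMBDA$-splitting (where $\TL = \KAPPA/\LAMBDA$) and to observe that $\HY{\ALG}{\TL}{\EPS}$ essentially feeds the first $\LAMBDA - \KAPPA$ "residue classes" of each item size to $\ALG$ and the remaining $\KAPPA$ classes to $\PC{\EPS}$, with only a bounded discrepancy coming from the fact that the online algorithm cannot identify the excess instance $\SIGMA_e$ exactly. Concretely, I would let $(\SIGMA_1,\dots,\SIGMA_\LAMBDA,\SIGMA_e) = \LS(\SIGMA)$ and split the profit of $\HY{\ALG}{\TL}{\EPS}$ into the profit earned on bins managed by $\ALG$ and the profit earned on bins managed by $\PC{\EPS}$, since by construction these two families of bins are disjoint. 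The $\PC{\EPS}$-part receives exactly the items whose occurrence-count residue lies in $\{\LAMBDA - \KAPPA,\dots,\LAMBDA-1\}$, which is precisely $\bigcup_{i=\LAMBDA-\KAPPA+1}^{\LAMBDA}\SIGMA_i$ up to the finitely many items diverted to $\SIGMA_e$ (at most $(\LAMBDA-1)k$ of them). Applying Lemma~\ref{lem:PC_on_split_instances} with $j = \KAPPA$ gives $\PC{\EPS}$-profit at least $\frac{\KAPPA}{\LAMBDA}(1-\EPS)\OPT(\SIGMA) - b_1 = \TL(1-\EPS)\OPT(\SIGMA) - b_1$ for some constant $b_1$.

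Next I would handle the $\ALG$-part. The items handed to $\ALG$ form (up to the bounded $\SIGMA_e$ correction) the instance $\SIGMA_1 \cup \dots \cup \SIGMA_{\LAMBDA-\KAPPA}$, and since all $\SIGMA_i$ contain the same multiset of items, Observation~\ref{lem:opt_on_split_instances_1} together with the equalities $\OPT(\SIGMA_i) = \OPT(\SIGMA_1)$ gives a lower bound on $\sum_{i=1}^{\LAMBDA-\KAPPA}\OPT(\SIGMA_i)$ in terms of $\OPT(\SIGMA)$. Precisely, from Lemma~\ref{lem:bound_for_opt_on_split_instances} we have $\OPT(\SIGMA) \leqslant \LAMBDA\cdot\OPT(\SIGMA_1) + (\LAMBDA-1)(k+\tau_S)$, hence $\OPT(\SIGMA_1) \geqslant \frac{\OPT(\SIGMA) - (\LAMBDA-1)(k+\tau_S)}{\LAMBDA}$, so $\sum_{i=1}^{\LAMBDA-\KAPPA}\OPT(\SIGMA_i) = (\LAMBDA-\KAPPA)\OPT(\SIGMA_1) \geqslant (1-\TL)\OPT(\SIGMA) - b_2$ for a constant $b_2$. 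Using that $\ALG$ is $c$-competitive — so $\ALG(\SIGMA') \geqslant c\cdot\OPT(\SIGMA') - b_\ALG$ for every instance $\SIGMA'$ — applied to the instance actually served to $\ALG$, and absorbing the $O((\LAMBDA-1)k)$ items of $\SIGMA_e$ (each affecting $\OPT$ and $\ALG$ by $O(1)$) into the additive constant, yields $\ALG$-profit at least $c\bigl((1-\TL)\OPT(\SIGMA) - b_2\bigr) - b_\ALG \geqslant (1-\TL)c\cdot\OPT(\SIGMA) - b_3$.

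Adding the two bounds and collecting $b = b_1 + b_3$ gives the claimed inequality. The main obstacle I anticipate is the bookkeeping around $\SIGMA_e$: the online algorithm $\HY{\ALG}{\TL}{\EPS}$ processes items by their running occurrence-count modulo $\LAMBDA$ and therefore does not carve out $\SIGMA_e$ the way $\LS$ does — the "last" copy of each size that $\LS$ would move to $\SIGMA_e$ is instead sent online to whichever of $\ALG$ or $\PC{\EPS}$ owns that residue class. I would argue that this affects at most $(\LAMBDA-1)k$ items in total, that rerouting a bounded number of items changes each algorithm's profit by a bounded amount, and that this bounded amount can be folded into the additive constant $b$; this is exactly the kind of "get sufficiently close" remark the text already flags before the theorem, and making it precise — e.g. by comparing the multiset $\PC{\EPS}$ actually receives to $\bigcup_{i=\LAMBDA-\KAPPA+1}^{\LAMBDA}\SIGMA_i$ and noting they differ in at most $(\LAMBDA-1)k$ items — is the one genuinely fiddly step; everything else is a direct assembly of Lemmas~\ref{lem:bound_for_opt_on_split_instances} and~\ref{lem:PC_on_split_instances}, Observation~\ref{lem:opt_on_split_instances_1}, and the definition of $c$-competitiveness.
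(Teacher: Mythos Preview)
Your proposal is correct and follows essentially the same route as the paper: decompose $\SIGMA$ via its $\LAMBDA$-splitting, bound the $\PC{\EPS}$ contribution by Lemma~\ref{lem:PC_on_split_instances} with $j=\KAPPA$, bound the $\ALG$ contribution via $c$-competitiveness together with Observation~\ref{lem:opt_on_split_instances_1} and Lemma~\ref{lem:bound_for_opt_on_split_instances}, and absorb the $\SIGMA_e$ discrepancy into the additive constant. Your explicit discussion of the $\SIGMA_e$ bookkeeping is in fact more careful than the paper's own treatment, which simply drops the $\SIGMA_e^{\ALG}$ and $\SIGMA_e^{\PC{\EPS}}$ pieces in one line; the cleanest justification there (rather than ``rerouting changes profit by a bounded amount,'' which is not obviously true for arbitrary online $\ALG$) is exactly to pass through $c$-competitiveness and the monotonicity of $\OPT$ in the item multiset, as you do.
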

\begin{proof}
Let $b_{\ALG}$ be the additive constant of $\ALG$, $b_{\PC{\EPS}} = m_{k,\EPS}^2 + m_{k,\EPS} + k \cdot j$.
Then, we set $b = b_{\ALG} + b_{\PC{\EPS}} + (\LAMBDA - 1) \cdot (k + \tau_S)$.
If $\abs{\SIGMA} \leqslant b$, the result follows trivially.
Hence, assume that $\abs{\SIGMA} > b$.

Let $(\SIGMA_1,\SIGMA_2,\ldots,\SIGMA_\LAMBDA,\SIGMA_e)$ be the $\LAMBDA$-splitting of $\SIGMA$.
Throughout this proof, let $\SIGMA_e^{\ALG}$ and $\SIGMA_e^{\PC{\EPS}}$ be the collection of instances from $\SIGMA_e$ that $\ALG$ and $\PC{\EPS}$ receive, respectively.
Then, by definition of $\HY{\ALG}{\TL}{\EPS}$,
\begin{align*}
\HY{\ALG}{\TL}{\EPS}[\SIGMA,\bm{{\hat{f}}}] = \ALG\left[ \left(\bigcup_{i=1}^{\LAMBDA - \KAPPA} \SIGMA_i \right) \cup \SIGMA_e^{\ALG} \right] \cup \PC{\EPS}\left[ \left(\bigcup_{i=\LAMBDA-\KAPPA + 1}^\LAMBDA \SIGMA_i \right) \cup \SIGMA_e^{\PC{\EPS}} , \bm{{\hat{f}}} \right]. 
\end{align*}
Hence,
\begin{align*}
\HY{\ALG}{\TL}{\EPS}(\SIGMA,\bm{{\hat{f}}}) \geqslant \ALG\left( \bigcup_{i=1}^{\LAMBDA-\KAPPA} \SIGMA_i \right) + \PC{\EPS}\left( \left (\bigcup_{i=\LAMBDA-\KAPPA+1}^\LAMBDA \SIGMA_i \right) , \bm{{\hat{f}}} \right).
\end{align*}
Set $b' = b_{\ALG} + b_{\PC{\EPS}}$. 
Then, by $c$-competitiveness of $\ALG$ and Lemma~\ref{lem:PC_on_split_instances}, 
\begin{align*}
\HY{\ALG}{\TL}{\EPS}(\SIGMA,\bm{{\hat{f}}}) &\geqslant c \cdot \OPT\left( \bigcup_{i=1}^{\LAMBDA-\KAPPA} \SIGMA_i \right) + \TL \cdot (1-\EPS) \cdot \OPT(\SIGMA) - b'.
\end{align*}
Since $\sum_{i=1}^\LAMBDA \OPT(\SIGMA_i) \leqslant \OPT\left( \bigcup_{i=1}^\LAMBDA \SIGMA_i \right)$, by Observation~\ref{lem:opt_on_split_instances_1}, and $\OPT(\SIGMA_i) = \OPT(\SIGMA_j)$, for all $i,j \in [\LAMBDA]$,
\begin{align*}
\HY{\ALG}{\TL}{\EPS}(\SIGMA,\bm{{\hat{f}}}) &\geqslant c \cdot \left( \sum_{i=1}^{\LAMBDA - \KAPPA} \OPT(\SIGMA_i) \right) + \TL \cdot (1-\EPS) \cdot \OPT(\SIGMA) - b' \\
&= (1-\TL) \cdot c \cdot \left( \sum_{i=1}^{\LAMBDA} \OPT(\SIGMA_i) \right)  + \TL \cdot (1-\EPS) \cdot \OPT(\SIGMA) - b'.
\end{align*}
Then, by Lemma~\ref{lem:bound_for_opt_on_split_instances},

\scalebox{.9}{
\begin{minipage}{1.1\textwidth}    
\begin{align*}
\HY{\ALG}{\TL}{\EPS}(\SIGMA,\bm{{\hat{f}}}) &\geqslant (1-\TL) \cdot c \cdot (\OPT(\SIGMA) - (\LAMBDA - 1) \cdot (k + \tau_S))  + \TL \cdot (1-\EPS) \cdot \OPT(\SIGMA) - b' \\
&\geqslant ((1-\TL) \cdot c + \TL \cdot (1-\EPS)) \cdot \OPT(\SIGMA) - b.
\end{align*}
\end{minipage}
}

\end{proof}

The above shows an explicit formula for the consistency of $\HY{\ALG}{\TL}{\EPS}$ as a function of the trust-level, $\TL$, $\EPS \in (0,1)$, and the performance guarantee of $\ALG$.
A very similar proof shows a guarantee on the robustness of $\HY{\ALG}{\TL}{\EPS}$.

\begin{theorem}\label{thm:robustness_hybrid}
For any finite set $S = \{s_1,s_2,\ldots,s_k\} \subseteq (0,1]$, any purely online algorithm, $\ALG$, for $\DBC{S}{}$, any $c \leqslant \CR{\ALG}$, and any $\EPS$, there exists a constant $b \in \ZZ^+$, such that for all instances $(\SIGMA,\bm{{\hat{f}}})$,
\begin{align*}
\HY{\ALG}{\TL}{\EPS}(\SIGMA,\bm{{\hat{f}}}) \geqslant (1-\TL) \cdot c \cdot \OPT(\SIGMA) - b.
\end{align*}
\end{theorem}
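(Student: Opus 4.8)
The statement is a lower bound on $\HY{\ALG}{\TL}{\EPS}$ that holds for \emph{all} instances $(\SIGMA,\bm{\hat f})$ — i.e.\ even when the prediction is adversarial — so, unlike Theorem~\ref{thm:consistency_hybrid}, I cannot use Lemma~\ref{lem:PC_on_split_instances} (which needs $\bm{f}=\bm{\hat f}$), and I must simply discard whatever the $\PC{\EPS}$-part contributes. The idea is that $\HY{\ALG}{\TL}{\EPS}$ runs $\ALG$ on a fixed fraction of the stream that, up to a bounded excess, is exactly $\LAMBDA-\KAPPA$ of the $\LAMBDA$ pieces $\SIGMA_1,\dots,\SIGMA_{\LAMBDA-\KAPPA}$ of the $\LAMBDA$-splitting of $\SIGMA$, placed in bins only $\ALG$ touches, so $\ALG$'s $c$-competitiveness on that substream already yields the bound.

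\textbf{Key steps.} First I would set the additive constant, e.g.\ $b = b_{\ALG} + (\LAMBDA-1)\cdot(k+\tau_S)$ (possibly plus a bounded term for the online approximation of $\SIGMA_e$), and dispose of the case $\abs{\SIGMA}\leqslant b$ as in the previous proofs (RHS non-positive, LHS non-negative). Then, assuming $\abs{\SIGMA}>b$, I would write, exactly as in the proof of Theorem~\ref{thm:consistency_hybrid},
\begin{align*}
\HY{\ALG}{\TL}{\EPS}(\SIGMA,\bm{\hat f}) \geqslant \ALG\!\left( \bigcup_{i=1}^{\LAMBDA-\KAPPA} \SIGMA_i \right) + \PC{\EPS}\!\left( \Bigl(\bigcup_{i=\LAMBDA-\KAPPA+1}^{\LAMBDA} \SIGMA_i\Bigr), \bm{\hat f} \right) \geqslant \ALG\!\left( \bigcup_{i=1}^{\LAMBDA-\KAPPA} \SIGMA_i \right),
\end{align*}
using non-negativity of the $\PC{\EPS}$ term and the fact that $\ALG$ and $\PC{\EPS}$ operate on disjoint sets of bins. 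Next, apply $c$-competitiveness of $\ALG$ to get $\ALG(\bigcup_{i=1}^{\LAMBDA-\KAPPA}\SIGMA_i) \geqslant c\cdot\OPT(\bigcup_{i=1}^{\LAMBDA-\KAPPA}\SIGMA_i) - b_{\ALG}$, then $\sum_{i=1}^{\LAMBDA-\KAPPA}\OPT(\SIGMA_i)\leqslant \OPT(\bigcup_{i=1}^{\LAMBDA-\KAPPA}\SIGMA_i)$ by Observation~\ref{lem:opt_on_split_instances_1}, together with $\OPT(\SIGMA_i)=\OPT(\SIGMA_j)$ for all $i,j$, so that $\sum_{i=1}^{\LAMBDA-\KAPPA}\OPT(\SIGMA_i) = (1-\TL)\sum_{i=1}^{\LAMBDA}\OPT(\SIGMA_i)$. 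Finally, Lemma~\ref{lem:bound_for_opt_on_split_instances} gives $\sum_{i=1}^{\LAMBDA}\OPT(\SIGMA_i) \geqslant \OPT(\SIGMA) - (\LAMBDA-1)\cdot(k+\tau_S)$, and chaining these inequalities and absorbing the two bounded error terms into $b$ yields $\HY{\ALG}{\TL}{\EPS}(\SIGMA,\bm{\hat f}) \geqslant (1-\TL)\cdot c\cdot\OPT(\SIGMA) - b$.

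\textbf{Main obstacle.} The only subtlety is the same one flagged in the text just before Theorem~\ref{thm:consistency_hybrid}: $\HY{\ALG}{\TL}{\EPS}$ cannot build a perfect $\LAMBDA$-splitting online, because it cannot know in advance which items belong in the excess part $\SIGMA_e$; it simply routes the $i$-th copy of each size by $c_a \bmod \LAMBDA$. So the substream actually handed to $\ALG$ is $(\bigcup_{i=1}^{\LAMBDA-\KAPPA}\SIGMA_i)\cup\SIGMA_e^{\ALG}$ for some $\SIGMA_e^{\ALG}\subseteq\SIGMA_e$, and I need $\ALG(\bigcup_{i=1}^{\LAMBDA-\KAPPA}\SIGMA_i \cup \SIGMA_e^{\ALG}) \geqslant \ALG(\bigcup_{i=1}^{\LAMBDA-\KAPPA}\SIGMA_i)$ — but $\ALG$ is an online algorithm on a \emph{longer} prefix-extended stream, not on the truncated one, so this is not literally monotone. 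The clean fix, mirroring the treatment in Theorem~\ref{thm:consistency_hybrid} where the same $\SIGMA_e^{\ALG}$ term appears, is to observe $\abs{\SIGMA_e}\leqslant(\LAMBDA-1)\cdot k$, hence removing those $O(\LAMBDA k)$ extra items changes $\OPT$ of the $\ALG$-substream by at most $(\LAMBDA-1)\cdot k$, and changes $\ALG$'s profit by at most the same bounded amount; I therefore fold an extra additive $(\LAMBDA-1)\cdot k$ into $b$. With that accounting the argument is routine, and everything else is exactly the chain of inequalities above.
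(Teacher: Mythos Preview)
Your approach is exactly what the paper intends: the theorem is stated immediately after Theorem~\ref{thm:consistency_hybrid} with the remark ``a very similar proof'', and the modification is precisely to discard the (possibly zero) $\PC{\EPS}$ contribution and keep only the $\ALG$-chain through Observation~\ref{lem:opt_on_split_instances_1}, the equality $\OPT(\SIGMA_i)=\OPT(\SIGMA_j)$, and Lemma~\ref{lem:bound_for_opt_on_split_instances}.

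One claim in your ``main obstacle'' paragraph does not hold, though. You assert that removing the at most $(\LAMBDA-1)\cdot k$ excess items ``changes $\ALG$'s profit by at most the same bounded amount''. For an arbitrary online $\ALG$ this is false: the excess items are the last occurrences \emph{of each size} in $\ALG$'s substream, but they are interleaved with items of other sizes, so deleting them does not produce a prefix, and $\ALG$'s behaviour on the two sequences can differ arbitrarily. Fortunately you never need to compare $\ALG$ on two different streams. Apply $c$-competitiveness directly to the stream $\ALG$ actually receives, $\SIGMA^{\ALG}=\bigl(\bigcup_{i=1}^{\LAMBDA-\KAPPA}\SIGMA_i\bigr)\cup\SIGMA_e^{\ALG}$, obtaining $\ALG(\SIGMA^{\ALG})\geqslant c\cdot\OPT(\SIGMA^{\ALG})-b_{\ALG}$; then use the part of your observation that \emph{is} valid, namely monotonicity of $\OPT$ in the multiset of items, to get $\OPT(\SIGMA^{\ALG})\geqslant\OPT\bigl(\bigcup_{i=1}^{\LAMBDA-\KAPPA}\SIGMA_i\bigr)$. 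From there your chain goes through verbatim, and no extra additive term for $\SIGMA_e^{\ALG}$ is needed. (The paper's proof of Theorem~\ref{thm:consistency_hybrid} writes the same step $\HY{\ALG}{\TL}{\EPS}(\SIGMA,\bm{\hat f})\geqslant\ALG\bigl(\bigcup_{i=1}^{\LAMBDA-\KAPPA}\SIGMA_i\bigr)+\cdots$ without comment, so you were right to flag it; this is the clean justification.)
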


% As a consequence, $\HY{\DNF}{\TL}{\EPS}$ is a $\left((1-\TL) \cdot \left( \frac{1}{2} + \frac{1}{2\cdot(k-1)}\right) + \TL \cdot (1-\EPS)\right)$-consistent and $\left((1-\TL) \cdot \left( \frac{1}{2} + \frac{1}{2 \cdot (k-1)} \right)\right)$-robust algorithm for $\DBC{k}{\mathcal{F}}$.

\section{Stochastic Setting}

In this section, we consider a setting for $\DBC{S}{}$ where item sizes are generated independently at random from an unknown distribution. This setting has already been studied for the more restricted $\DBC{k}{}$ problem, where Csirik, Johnson and Kenyon used variants of the Bin Packing algorithm \textquotedblleft Sum-of-Squares\textquotedblright, first introduced  in~\cite{CJKSW99,CJKOSW06}, to develop algorithms for the problem.
Rather than designing algorithms that perform well in the worst case, they aimed to design algorithms that perform well in the average case.
Specifically, they develop an algorithm, called $SS^\ast$, with $\EAR{SS^\ast}(D) = 1$ (see Equation~\eqref{eq:asymptotic_expected_ratio} for the definition of \EAR{SS^\ast}), for all discrete distributions $D$ of $F_k$, with rational probabilities.

In this section, we use a PAC-learning bound for learning frequencies in discrete distributions to derive a family of algorithms called \emph{purely online profile covering} ($\{\POPC{\EPS}{\DELTA}\}_{\EPS,\DELTA}$). These algorithms are parametrized by two real numbers $\EPS,\DELTA \in (0,1)$, satisfying that, for \emph{all} finite sets $S = \{s_1,s_2,\ldots,s_k\} \subseteq (0,1]$, there exists a constant $b \in \RR^+$, such that for all discrete distributions $D = \{p_1,p_2,\ldots,p_k\}$ of $S$, allowing irrational probabilities,
\begin{align}\label{eq:expected_ratio_POPC}
P\left( \POPC{\EPS}{\DELTA}(\SIGMA_n(D)) \geqslant (1-\EPS) \cdot \OPT(\SIGMA_n(D)) - b \right) \geqslant 1-\DELTA,
\end{align}
Observe that this guarantee is true, even for adversarial $S$ and $D$.
Clearly, Equation~\eqref{eq:expected_ratio_POPC} directly implies that
\begin{align}\label{eq:asymptotic_expected_ratio_POPC}
P(\EAR{\POPC{\EPS}{\DELTA}}(D) \geqslant 1- \EPS ) \geqslant 1- \DELTA.
\end{align}
The guarantee from Equation~\eqref{eq:expected_ratio_POPC} is, however, stronger than the from Equation~\eqref{eq:asymptotic_expected_ratio_POPC}, in that the additive term in Equation~\eqref{eq:expected_ratio_POPC} is constant, whereas the additive term for $\POPC{\EPS}{\DELTA}$ in Equation~\eqref{eq:asymptotic_expected_ratio_POPC} may be a function of $n$. As pointed out in~\cite{BF20}, having only constant loss before giving a multiplicative performance guarantee is a desirable property.

We formalize the strategy of $\POPC{\EPS}{\DELTA}$ in Algorithm~\ref{alg:popc}.
In words; the algorithm works by defining a ``sample size", $\THRESHOLDALG$, as a function of $k$, $\EPS$ and $\DELTA$. Intuitively, observing $\THRESHOLDALG$ items from the input prefix is sufficient to make predictions about the frequency of items with respect to $D$ that are $\EPS$-accurate with confidence $1-\DELTA$.
We prove this in Section~\ref{sec:pac}.
While learning $D$, $\POPC{\EPS}{\DELTA}$ places the first $\THRESHOLDALG$ items using the (purely online) $\DNF$ strategy while observing the item frequencies. 
After placing the first $\THRESHOLDALG$, the algorithm uses the observed frequencies to make an estimate - prediction - about the item frequencies and applies $\PC{\frac{\EPS}{2}}$ to place the remaining items.

\begin{algorithm}
\caption{$\POPC{\EPS}{\DELTA}$}
\begin{algorithmic}[1]
\STATE \textbf{Input:} A $\DBC{S}{}$-instance, $\SIGMA$
\STATE $ss \leftarrow 0$ \COMMENT{Sample size}
\STATE Compute $\tau_S$, $\tau_S^m$, and $k = \abs{S}$
\STATE $m_{\frac{\EPS}{2}} \leftarrow \lceil 6 \cdot \tau_S \cdot \tau_S^m \cdot \EPS^{-1} \rceil$
\STATE $m_{k,\frac{\EPS}{2}} \leftarrow m_{\frac{\EPS}{2}} + k$
\STATE $\THRESHOLDALG \leftarrow \on{max}\left\{16 \cdot k \cdot (m_{k,\frac{\EPS}{2}} + 1)^2, 32 \cdot (m_{k,\frac{\EPS}{2}} + 1)^2 \cdot \ln \left( \frac{2}{1 - \sqrt{1 - \DELTA}} \right) \right\}$
\FORALL {$i \in [k]$}
	\STATE $c_{s_i} \leftarrow 0$ \COMMENT{Number of items of size $s_i$}
\ENDFOR
\WHILE {receiving items, $a$, \textbf{and} $ss < \THRESHOLDALG$}
	\STATE $c_a \leftarrow c_a + 1$
	\STATE Place $a$ in a $\DNF$-marked bin using $\DNF$
	\STATE $ss \leftarrow ss + 1$
\ENDWHILE
\FOR {$i =1,2,\ldots,k$} \label{line:create_preds_start}
	\STATE $\hat{f}_i^\THRESHOLDALG = \frac{c_{s_i}}{\THRESHOLDALG}$ 
\ENDFOR
\STATE $\bm{{\hat{f}^\THRESHOLDALG}} = \left( \hat{f}_1^\THRESHOLDALG , \hat{f}_2^\THRESHOLDALG,\ldots,\hat{f}_k^\THRESHOLDALG \right)$ \label{line:create_preds_end}
\STATE Run Lines~\ref{line:pc_initialization_start}-\ref{line:pc_initialization_end} of $\PC{\frac{\EPS}{2}}$ (see Algorithm~\ref{alg:profile_covering}), given the prediction $\bm{{\hat{f}^\THRESHOLDALG}}$ 
\WHILE {receiving items, $a$,}
	\STATE Place $a$ using $\PC{\frac{\EPS}{2}}$ \COMMENT{See Lines~\ref{line:pc_distribution_start}-\ref{line:pc_distribution_end} in Algorithm~\ref{alg:profile_covering}}
\ENDWHILE
\end{algorithmic}
\label{alg:popc}
\end{algorithm}

Before formalizing and proving the claim from Equation~\eqref{eq:expected_ratio_POPC}, we review a PAC-learning bound for learning frequencies in discrete distributions~\cite{C20}.

\subsection{A PAC-Learning Bound for Learning Frequencies}\label{sec:pac}

It is well-known~\cite{C20}, that probabilities in discrete distributions are PAC-learnable~\cite{SB14}.

\begin{proposition}\label{prop:pac}
For any finite set $S = \{s_1,s_2,\ldots,s_k\} \subseteq (0,1]$, there exists an algorithm, $\PACALG$, and a map $\THRESHOLD{\PACALG}\colon \RR^+ \times (0,1) \rightarrow \ZZ^+$, such that for any $\GAMMA \in \RR^+$, any $\DELTA \in (0,1)$, any (unknown) discrete distribution $D = \{p_1,p_2,\ldots,p_k\}$ of $S$, and any $n \geqslant \THRESHOLD{\PACALG}(\GAMMA,\DELTA)$, letting $\{X_i\}_{i=1}^n$ be a sequence of independent identically distributed random variables, with $X_i \sim D$,
\begin{align*}
P\left( L^1(\PACALG(X_1,X_2,\ldots,X_n) , D) \leqslant \GAMMA \right) \geqslant 1 - \DELTA,
\end{align*}
where $L^1$ is the usual $L^1$-distance.
For learning frequencies in discrete distributions, $\PACALG$ is the algorithm which outputs the predicted distribution:
\begin{align*}
\PACALG(X_1,X_2,\ldots,X_n) = \left\{ \hat{p}_i\ \bigg|\ i \in [k] \text{ and } \hat{p}_i = \frac{1}{n} \cdot \sum_{j=1}^n \mathds{1}_{\{s_i\}}(X_j) \right\},
\end{align*}
and, for any $\GAMMA \in \RR^+$ and $\DELTA \in (0,1)$, the map $\THRESHOLD{\PACALG}$ is given by
\begin{align*}
\THRESHOLD{\PACALG}(\GAMMA,\DELTA) = \on{max}\left\{ \frac{4 \cdot k}{\GAMMA^2} , \frac{8}{\GAMMA^2} \cdot \ln\left( \frac{2}{\DELTA} \right) \right\}.
\end{align*}
\end{proposition}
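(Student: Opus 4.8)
The statement is the standard learning guarantee for the empirical (plug-in) estimator of a discrete distribution, essentially as recorded in~\cite{C20}; one could simply invoke that reference, but a self-contained argument is short. Fix $S$, $\GAMMA$, $\DELTA$, $D$, and $n \geqslant \THRESHOLD{\PACALG}(\GAMMA,\DELTA)$, and write $\hat p_i = \frac{1}{n}\sum_{j=1}^n \mathds{1}_{\{s_i\}}(X_j)$ for the output coordinates of $\PACALG$, so that $n\hat p_i \sim \mathrm{Bin}(n,p_i)$; in particular $\mathbb{E}[\hat p_i] = p_i$ and $\mathrm{Var}(\hat p_i) = p_i(1-p_i)/n \leqslant p_i/n$. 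Set $Z = L^1(\PACALG(X_1,\ldots,X_n),D) = \sum_{i=1}^k \abs{\hat p_i - p_i}$; the goal is to show $P(Z \leqslant \GAMMA) \geqslant 1-\DELTA$.

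First I would bound $\mathbb{E}[Z]$. By Jensen's inequality (concavity of $\sqrt{\cdot}$) together with the variance computation, $\mathbb{E}\abs{\hat p_i - p_i} \leqslant \sqrt{\mathbb{E}[(\hat p_i - p_i)^2]} = \sqrt{\mathrm{Var}(\hat p_i)} \leqslant \sqrt{p_i/n}$, and then by Cauchy--Schwarz $\sum_{i=1}^k \sqrt{p_i} \leqslant \sqrt{k\sum_{i=1}^k p_i} = \sqrt{k}$. Hence $\mathbb{E}[Z] \leqslant \sqrt{k/n}$, and since $n \geqslant 4k/\GAMMA^2$ this gives $\mathbb{E}[Z] \leqslant \GAMMA/2$.

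Next I would establish concentration of $Z$ about its mean via the bounded-differences (McDiarmid) inequality. Regarding $Z$ as a function of the independent samples $X_1,\ldots,X_n$, altering a single $X_j$ changes the empirical counts of at most two symbols, by one each, hence changes at most two of the $\hat p_i$ by $1/n$ each, so $Z$ changes by at most $2/n$. With all bounded-differences constants equal to $2/n$ (so their squares sum to $4/n$), McDiarmid's inequality yields $P(Z \geqslant \mathbb{E}[Z] + t) \leqslant \exp(-t^2 n/2)$ for every $t > 0$.

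Finally I would combine the two estimates. Since $\mathbb{E}[Z] \leqslant \GAMMA/2$, we get $P(Z \geqslant \GAMMA) \leqslant P(Z \geqslant \mathbb{E}[Z] + \GAMMA/2) \leqslant \exp(-\GAMMA^2 n/8)$; and since $n \geqslant (8/\GAMMA^2)\ln(2/\DELTA)$, the right-hand side is at most $\DELTA/2 \leqslant \DELTA$, which is exactly the claimed bound. There is no genuine obstacle here: every step is a textbook estimate, and the only points needing care are the double use of Jensen and Cauchy--Schwarz in the expectation bound and checking that the stated $\THRESHOLD{\PACALG}$ is large enough to absorb both the bias term ($\GAMMA/2$) and the McDiarmid tail, the factor $2$ inside the logarithm merely providing a harmless $\DELTA/2$ of slack.
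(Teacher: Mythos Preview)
Your proposal is correct. The paper does not actually prove Proposition~\ref{prop:pac}; it simply records it as a known fact with a citation to~\cite{C20}, so there is no ``paper's own proof'' to compare against. Your self-contained argument via $\mathbb{E}[Z] \leqslant \sqrt{k/n}$ (Jensen plus Cauchy--Schwarz) and McDiarmid with bounded differences $2/n$ is the standard route and every step checks out, including the arithmetic matching the stated sample complexity $\THRESHOLD{\PACALG}(\GAMMA,\DELTA)$.
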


\subsection{Analysis of $\POPC{\EPS}{\DELTA}$}

We formalize and prove the claim from Equation~\eqref{eq:expected_ratio_POPC}:

\begin{theorem}\label{thm:expected_ratio_popc}
For all finite sets $S = \{s_1,s_2,\ldots,s_k\} \subset (0,1]$, and all $\EPS, \DELTA \in (0,1)$, there exists a constant $b \in \ZZ^+$, such that for all discrete distributions $D = \{p_1,p_2,\ldots,p_k\}$ of $S$, and all $n \in \ZZ^+$,
\begin{align*}
P\left( \POPC{\EPS}{\DELTA}(\SIGMA_n(D)) \geqslant (1-\EPS) \cdot \OPT(\SIGMA_n(D)) - b \right) \geqslant 1-\DELTA,
\end{align*}
where $\SIGMA_n(D) = \langle X_1,X_2, \ldots,X_n\rangle$, and $\{X_i\}_{i=1}^n$ is a sequence of independent identically distributed random variables with $X_i \sim D$, for all $i \in [n]$.
\end{theorem}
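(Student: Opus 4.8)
The plan is to combine the PAC-learning bound from Proposition~\ref{prop:pac} with the consistency guarantee for $\PC{\EPS}$ from Theorem~\ref{thm:consistency_of_profilecovering}, bridging the gap between them with a perturbation argument that controls how $\PC{\cdot}$ degrades when its frequency input is close to — but not exactly equal to — the true empirical frequencies of the suffix. The subtlety is that $\POPC{\EPS}{\DELTA}$ runs $\PC{\frac{\EPS}{2}}$ not on the whole instance and not with the true frequency vector $\bm{f}^\SIGMA$, but on the suffix $\langle X_{\THRESHOLDALG+1},\ldots,X_n\rangle$ with the \emph{sample} frequency vector $\bm{\hat f}^\THRESHOLDALG$ learned from the prefix. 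So I would structure the argument in three layers: (1) a high-probability event on which $\bm{\hat f}^\THRESHOLDALG$ is $L^1$-close to $D$, hence close to $\bm{f}$ of the suffix; (2) a deterministic statement that, conditioned on such closeness, $\PC{\frac{\EPS}{2}}$ on the suffix covers at least $(1-\EPS)\cdot\OPT$ of the suffix up to an additive constant; (3) bookkeeping to pass from $\OPT$ of the suffix back to $\OPT(\SIGMA_n(D))$, absorbing the prefix (of constant length $\THRESHOLDALG$) and the $\DNF$-covered bins into the additive constant.

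For layer (1): the prefix has exactly $\THRESHOLDALG$ i.i.d.\ samples, and by the choice of $\THRESHOLDALG$ in Algorithm~\ref{alg:popc} — namely $\THRESHOLDALG \geqslant \on{max}\{16k(m_{k,\EPS/2}+1)^2,\ 32(m_{k,\EPS/2}+1)^2\ln(2/(1-\sqrt{1-\DELTA}))\}$ — we have $\THRESHOLDALG \geqslant \THRESHOLD{\PACALG}(\GAMMA,\DELTA')$ with $\GAMMA = \frac{1}{2(m_{k,\EPS/2}+1)}$ and $\DELTA' = 1-\sqrt{1-\DELTA}$. So Proposition~\ref{prop:pac} gives $P(\|\bm{\hat f}^\THRESHOLDALG - D\|_1 \leqslant \GAMMA)\geqslant 1-\DELTA'$. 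Separately, the suffix $\langle X_{\THRESHOLDALG+1},\ldots,X_n\rangle$ is itself i.i.d.\ from $D$; if $n-\THRESHOLDALG \geqslant \THRESHOLD{\PACALG}(\GAMMA,\DELTA')$ then its empirical frequency vector $\bm{f}$ also satisfies $\|\bm{f}-D\|_1\leqslant\GAMMA$ with probability $\geqslant 1-\DELTA'$; by a union bound and $1-2\DELTA'+\DELTA'^2 = (1-\DELTA')^2 = 1-\DELTA$ wait — more carefully, $(1-\DELTA')^2 = 1-\DELTA$ by our choice of $\DELTA'$, so intersecting the two events gives probability $\geqslant 1-\DELTA$, and on it $\|\bm{\hat f}^\THRESHOLDALG - \bm{f}\|_1 \leqslant 2\GAMMA = \frac{1}{m_{k,\EPS/2}+1}$. (If $n-\THRESHOLDALG < \THRESHOLD{\PACALG}(\GAMMA,\DELTA')$, the suffix has constant length and the whole theorem is trivial for an appropriate $b$, since the left side is nonnegative and the right side can be made nonpositive.)

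For layer (2) — the main obstacle — I need a robust analogue of Lemma~\ref{lem:profit_groups_perfect_predictions}/Theorem~\ref{thm:consistency_of_profilecovering} allowing $\bm{\hat f}\neq\bm{f}$. The point of requiring $\|\bm{\hat f}^\THRESHOLDALG-\bm{f}\|_1 \leqslant \frac{1}{m_{k,\EPS/2}+1}$ is exactly that then $\lfloor \hat f_i\cdot m_{k,\EPS/2}\rfloor = \lfloor f_i \cdot m_{k,\EPS/2}\rfloor$ for every $i$: indeed $|\hat f_i - f_i|\leqslant \frac{1}{m_{k,\EPS/2}+1}$ forces $|\hat f_i m_{k,\EPS/2} - f_i m_{k,\EPS/2}| < 1$, and since $m_{k,\EPS/2} = m_{\EPS/2}+k$ one checks the two reals cannot straddle an integer — wait, that last step needs a little care since two reals within distance $<1$ can still straddle an integer; so instead I would argue that the \emph{subproblem} $\SIGMA_{\on{sub}}$ built from $\bm{\hat f}^\THRESHOLDALG$ satisfies $n_i^{\SIGMA_{\on{sub}}} \leqslant \lfloor f_i\cdot m_{k,\EPS/2}\rfloor + 1$ and $\geqslant m_{\EPS/2}$ per group, which is all that Lemmas~\ref{lem:num_completed_groups} and~\ref{lem:profit_groups_perfect_predictions} actually used about the group composition — the $+1$ slack per size is absorbed exactly as the $k$-term already is in the proof of Lemma~\ref{lem:profit_groups_perfect_predictions}. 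Thus $g_{\EPS/2}\cdot\PROFIT{\GROUP{\cdot}{1}} \geqslant (1-\EPS)\cdot\OPT(\text{suffix})$ up to a constant, on the good event. Then layer (3): $\OPT(\text{suffix})\geqslant \OPT(\SIGMA_n(D)) - \THRESHOLDALG$ (removing $\THRESHOLDALG$ items destroys at most $\THRESHOLDALG$ covered bins, or more crudely the prefix contributes at most $\THRESHOLDALG$ to $\OPT$), $\POPC{\EPS}{\DELTA}(\SIGMA_n(D)) \geqslant \PC{\EPS/2}(\text{suffix},\bm{\hat f}^\THRESHOLDALG)$, and setting $b$ to be the sum $\THRESHOLDALG + b_{\PC{\EPS/2}} + (\text{the length threshold for the suffix})$ — all constants depending only on $S,\EPS,\DELTA$ — closes the argument. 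The one genuinely delicate point to get right is the floor/straddling issue in layer (2); I expect it is cleanest to bound $n_i^{\SIGMA_{\on{sub}}}$ directly rather than claim equality of floors.
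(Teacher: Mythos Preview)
Your proposal is correct and follows the paper's approach almost line for line: PAC-bound the prefix and the suffix each with confidence $\sqrt{1-\DELTA}$, combine by independence to get $L^1(\bm{\hat f}^{\THRESHOLDALG},\bm{f}^{\SIGMA_s}) < 1/m_{k,\EPS/2}$ with probability at least $1-\DELTA$, conclude $\bigl|\lfloor \hat f_i\, m_{k,\EPS/2}\rfloor - \lfloor f_i\, m_{k,\EPS/2}\rfloor\bigr| \leqslant 1$, and then re-run the Lemma~\ref{lem:profit_groups_perfect_predictions} argument on the suffix. One small sharpening for your layer~(2): you actually need \emph{both} directions of that $\pm 1$ floor bound (one to lower-bound how many $\hat f$-groups the suffix can complete, the other to compare the profit of an $\hat f$-group to $\PROFIT{\overline N}$), and the accumulated slack is of order $g_{\EPS/2}\cdot k$ rather than an additive constant --- it is absorbed \emph{multiplicatively} exactly like the $k$-term in Lemma~\ref{lem:profit_groups_perfect_predictions}, which is precisely why the algorithm runs $\PC{\EPS/2}$ so as to land at $(1-\EPS)$.
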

\begin{proof}
Set $\THRESHOLDALG = \on{max}\left\{16 \cdot k \cdot (m_{k,\frac{\EPS}{2}} + 1)^2, 32 \cdot (m_{k,\frac{\EPS}{2}} + 1)^2 \cdot \ln \left( \frac{2}{1 - \sqrt{1 - \DELTA}} \right) \right\}$, and $b = \on{max}\{2 \cdot \THRESHOLDALG,m_{k,\frac{\EPS}{2}}^2 + m_{k,\frac{\EPS}{2}} + \THRESHOLDALG\}$, and observe that $b$ is independent of the input length $n$.
By similar arguments as above, we assume that $n \geqslant b$.

For ease of notation, we set $\tilde{\EPS} = \frac{\EPS}{2}$.

Throughout this proof, we split $L_n(D)$ into two subsequences, $\SIGMA_{a}$ and $\SIGMA_{s}$.
Formally, we set $\SIGMA_{a} = \langle X_1,X_2,\ldots,X_{\THRESHOLDALG}\rangle$, and $\SIGMA_{s} = \langle X_{\THRESHOLDALG + 1},X_{\THRESHOLDALG + 2},\ldots, X_n\rangle$.

By construction, $\POPC{\EPS}{\DELTA}$ uses $\DNF$ on the first $\THRESHOLDALG$ items while counting the number of items of each size.
After having seen $\THRESHOLDALG$ items, it creates the predicted distribution $\bm{{\hat{f}^\THRESHOLDALG}} = \PACALG(X_1,X_2,\ldots,X_{\THRESHOLDALG})$, by Lines~\ref{line:create_preds_start}-\ref{line:create_preds_end} in Algorithm~\ref{alg:popc}.
By construction of $\THRESHOLDALG$ and Proposition~\ref{prop:pac},
\begin{align*}
P\left( L^1(\bm{{\hat{f}^\THRESHOLDALG}},D) \leqslant \frac{1}{2 \cdot (m_{k,\tilde{\EPS}} + 1)}  \right) \geqslant \sqrt{1 - \DELTA}.
\end{align*}
Hence, by construction of $\bm{{\hat{f}^\THRESHOLDALG}}$ and the definition of $L^1$,
\begin{align*}
P\left( \sum_{i=1}^k \abs{\hat{f}_i^\THRESHOLDALG - p_i} \leqslant \frac{1}{2 \cdot (m_{k,\tilde{\EPS}} + 1)}  \right) \geqslant \sqrt{1-\DELTA}.
\end{align*}
Denote by $\bm{{f^{\SIGMA_{s}}}}$ the true frequencies of $\SIGMA_{s} = \langle X_{\THRESHOLDALG+1},X_{\THRESHOLDALG+2},\ldots,X_n\rangle$.
Since $n \geqslant 2 \cdot \THRESHOLDALG$, we know that $\abs{\SIGMA_{s}} \geqslant \THRESHOLDALG$, and so, by similar arguments as above, 
\begin{align*}
P\left( \sum_{i=1}^k \abs{f^{\SIGMA_{s}}_i - p_i} \leqslant \frac{1}{2 \cdot (m_{k,\tilde{\EPS}} + 1)}  \right) \geqslant \sqrt{1-\DELTA}.
\end{align*}
Let $E_{\bm{{\hat{f}^\THRESHOLDALG}}}$ be the event $\sum_{i=1}^k \abs{\hat{f}_i^\THRESHOLDALG - p_i} \leqslant \frac{1}{2 \cdot (m_{k,\tilde{\EPS}} + 1)}$, and $E_{\bm{{f^{\SIGMA_{s}}}}}$ be the event $\sum_{i=1}^k \abs{f^{\SIGMA_{s}}_i - p_i} \leqslant \frac{1}{2 \cdot (m_{k,\tilde{\EPS}} + 1)}$.
Since $E_{\bm{{\hat{f}^\THRESHOLDALG}}}$ and $E_{\bm{{f^{\SIGMA_{s}}}}}$ are independent, 
\begin{align*}
P\left( E_{\bm{{\hat{f}^\THRESHOLDALG}}} \textbf{ and }E_{\bm{{f^{\SIGMA_{s}}}}}\right) \geqslant 1-\DELTA,
\end{align*}
and so, with probability at least $1-\DELTA$, we have that
\begin{align}
L^1(\bm{{\hat{f}^\THRESHOLDALG}} , \bm{{f^{\SIGMA_{s}}}} ) =  \sum_{i=1}^k \abs{\hat{f}_i^\THRESHOLDALG - f_i^{\SIGMA_{s}}} \leqslant \sum_{i=1}^k \abs{\hat{f}_i^\THRESHOLDALG - p_i} + \sum_{i=1}^k  \abs{f_i^{\SIGMA_{s}} - p_i} < \frac{1}{m_{k,\tilde{\EPS}}}.
\label{eq:closeApprox}
\end{align}
This means that the predictions $\POPC{\EPS}{\DELTA}$ creates are very close to the true frequencies of the remainder of the instance, $\SIGMA_s$, with high probability.

Next, by construction of $\POPC{\EPS}{\DELTA}$, we get that
\begin{align*}
\POPC{\EPS}{\DELTA}(\SIGMA_n(D)) \geqslant \PC{\tilde{\EPS}}(\SIGMA_s,\bm{{\hat{f}^\THRESHOLDALG}}).
\end{align*}
Then, as long as we can verify that that
\begin{align}\label{eq:to_check_average_case}
\PC{\tilde{\EPS}}(\SIGMA_s,\bm{{\hat{f}^\THRESHOLDALG}}) \geqslant (1 - \EPS) \cdot \OPT(\SIGMA_s),
\end{align}
whenever $L^1(\bm{{\hat{f}^\THRESHOLDALG}} , \bm{{f^{\SIGMA_{s}}}} ) < \frac{1}{m_{k,\tilde{\EPS}}}$, we get that
\begin{align*}
\POPC{\EPS}{\DELTA}(\SIGMA_n(D)) &\geqslant \PC{\tilde{\EPS}}(\SIGMA_s,\bm{{\hat{f}^\THRESHOLDALG}}) \\
&\geqslant (1-\EPS) \cdot \OPT(\SIGMA_s) \\
&\geqslant (1-\EPS) \cdot \OPT(\SIGMA_n(D)) - 2 \cdot \THRESHOLDALG,
\end{align*}
Since $P(L^1(\bm{{\hat{f}^\THRESHOLDALG}} , \bm{{f^{\SIGMA_{s}}}} ) < \frac{1}{m_{k,\tilde{\EPS}}}) \geqslant 1 - \DELTA$, by Equality~\ref{eq:closeApprox}, we can write
\begin{align*}
P\left( \POPC{\EPS}{\DELTA}(\SIGMA_n(D)) \geqslant (1- \EPS) \cdot \OPT(\SIGMA_n(D)) - 2 \cdot \THRESHOLDALG \right) \geqslant 1-\DELTA,
\end{align*}
which completes the proof.

It remains to prove that Equation~\eqref{eq:to_check_average_case} holds whenever $L^1(\bm{{\hat{f}^\THRESHOLDALG}} , \bm{{f^{\SIGMA_{s}}}} ) < \frac{1}{m_{k,\tilde{\EPS}}}$.
To this end, assume that $L^1(\bm{{\hat{f}^\THRESHOLDALG}} , \bm{{f^{\SIGMA_{s}}}} ) < \frac{1}{m_{k,\tilde{\EPS}}}$.

Let $g_{\tilde{\EPS}}$ be the number of groups that $\PC{\tilde{\EPS}}$ would complete on instance $(\SIGMA_s,\bm{{f^{\SIGMA_s}}})$, that is, with perfect predictions. 
Moreover, let $\PLACEHOLDERS{\SIGMA_s,\tilde{\EPS}} = \OPT[\langle \lfloor f^{\SIGMA_s}_1\cdot m_{k,\tilde{\EPS}} \rfloor  , \ldots, \lfloor f^{\SIGMA_s}_k \cdot m_{k,\tilde{\EPS}} \rfloor  \rangle]$, and let $\PLACEHOLDERS{\THRESHOLDALG,\tilde{\EPS}} = \OPT[\langle \lfloor \hat{f}^\THRESHOLDALG_1 \cdot m_{k,\tilde{\EPS}} \rfloor , \ldots, \lfloor \hat{f}_k^\THRESHOLDALG \cdot m_{k,\tilde{\EPS}} \rfloor  \rangle]$, where items have been replaced with placeholders.
% \shahin{Did you mean to use $\sigma_a$ instead of $\THRESHOLDALG$? The last sentence on ``where items have been..." needs clarification.}

First, we compare the number of items of size $s_i$ in $\PLACEHOLDERS{\SIGMA_s,\tilde{\EPS}}$ compared to $\PLACEHOLDERS{\THRESHOLDALG,\tilde{\EPS}}$.
To this end, for all $i \in [k]$, set $\MU_i = \abs{\lfloor \hat{f}_i^\THRESHOLDALG \cdot m_{k,\tilde{\EPS}} \rfloor - \lfloor f^{\SIGMA_s}_i \cdot m_{k,\tilde{\EPS}} \rfloor}$. 
Then,
\begin{align*}
\MU_i \leqslant \abs{ \hat{f}_i^\THRESHOLDALG \cdot m_{k,\tilde{\EPS}} - f^{\SIGMA_s}_i \cdot m_{k,\tilde{\EPS}} } + 1 = \abs{\hat{f}_i^\THRESHOLDALG - f_i^{\SIGMA_s}}\cdot m_{k,\tilde{\EPS}} + 1.
\end{align*}
Since $L^1(\bm{{\hat{f}^\THRESHOLDALG}} , \bm{{f^{\SIGMA_{s}}}} ) < \frac{1}{m_{k,\tilde{\EPS}}}$, we get that $\sum_{i=1}^k \abs{\hat{f}_i^\THRESHOLDALG - f_i^{\SIGMA_s}} < \frac{1}{m_{k,\tilde{\EPS}}}$, which implies that $\abs{\hat{f}_i^\THRESHOLDALG - f_i^{\SIGMA_s}} < \frac{1}{m_{k,\tilde{\EPS}}}$, for all $i \in [k]$.
Hence, $\MU_i < 2$ for all $i \in [k]$, and since $\MU_i \in \NN$, we get that $\MU_i \in \{0,1\}$, for all $i \in [k]$.

Next, we lower bound $\PC{\tilde{\EPS}}(\SIGMA_s,\bm{{\hat{f}^\THRESHOLDALG}})$, as a function of $\PROFIT{\PLACEHOLDERS{\THRESHOLDALG,\tilde{\EPS}}}$ and $g_{\tilde{\EPS}}$.
Since $\PC{\tilde{\EPS}}$ would complete $g_{\tilde{\EPS}}$ groups on instance $(\SIGMA_s,\bm{{f^{\SIGMA_s}}})$, then, for all $i \in [k]$, $\SIGMA_s$ contains at least $g_{\tilde{\EPS}} \cdot \lfloor f^{\SIGMA_s}_i \cdot m_{k,\tilde{\EPS}} \rfloor$ items of size $s_i$.
Since $\MU_i \in \{0,1\}$ for all $i \in [k]$, then, on instance $(\SIGMA_s,\bm{{\hat{f}^\THRESHOLDALG}})$, $\PC{\tilde{\EPS}}$ fills all placeholders of size $s_i$ in $g_{\tilde{\EPS}}$ groups, except at most $g_{\tilde{\EPS}}$.
Hence,
\begin{align*}
\PC{\tilde{\EPS}}(\SIGMA_s,\bm{{\hat{f}^\THRESHOLDALG}}) \geqslant g_{\tilde{\EPS}} \cdot \PROFIT{\PLACEHOLDERS{\THRESHOLDALG,\tilde{\EPS}}} - g_{\tilde{\EPS}} \cdot k.
\end{align*}
For the rest of this proof, we use a strategy as in the proof of Theorem~\ref{thm:consistency_of_profilecovering}. 

To this end, let $N$ be the covering obtained by creating a copy of $\OPT[\SIGMA_s]$, from which we have removed a number of bins of type $t \in \mathcal{T}_S$, such that the number of bins of type $t$ is divisible by $g_{\tilde{\EPS}}$, for all $t \in \mathcal{T}_S$. 
By similar arguments as in Lemma~\ref{lem:profit_groups_perfect_predictions}, we get that $\PROFIT{N} \geqslant (1-\frac{\tilde{\EPS}}{3}) \cdot \OPT(\SIGMA_s)$. 

Next, observe that $N$ is comprised of $g_{\tilde{\EPS}}$ identical coverings $\overline{N}$.
Since $n \geqslant b$, we can write $\abs{\SIGMA_s} \geqslant m_{k,\tilde{\EPS}}^2 + m_{k,\tilde{\EPS}}$.
Hence, by a similar argument as in the proof of Lemma~\ref{lem:profit_groups_perfect_predictions}, we have $n_i^{\overline{N}} \leqslant n_i^{\PLACEHOLDERS{\SIGMA_s,\tilde{\EPS}}} + 1 \leqslant n_i^{\PLACEHOLDERS{\THRESHOLDALG,\tilde{\EPS}}} + 2$, for all $i \in [k]$, and thus
\begin{align*}
\PROFIT{\PLACEHOLDERS{\THRESHOLDALG,\tilde{\EPS}}} \geqslant \PROFIT{\overline{N}} - 2 \cdot k.
\end{align*}
Moreover, as in Lemma~\ref{lem:profit_groups_perfect_predictions}, it holds that $k \leqslant \frac{\frac{\tilde{\EPS}}{3} \cdot \PROFIT{\overline{N}}}{1 - \frac{\tilde{\EPS}}{3}}$, and so
\begin{align*}
\PROFIT{\PLACEHOLDERS{\THRESHOLDALG,\tilde{\EPS}}} \geqslant \PROFIT{\overline{N}} - 2 \cdot \frac{\frac{\tilde{\EPS}}{3} \cdot \PROFIT{\overline{N}}}{1 - \frac{\tilde{\EPS}}{3}} \geqslant (1-\tilde{\EPS}) \cdot \PROFIT{\overline{N}}.
\end{align*}
Conclusively,
\begin{align*}
\PC{\tilde{\EPS}}(\SIGMA_s,\bm{{\hat{f}^\THRESHOLDALG}}) &\geqslant g_{\tilde{\EPS}} \cdot (\PROFIT{\PLACEHOLDERS{\THRESHOLDALG,\tilde{\EPS}}} - k) \geqslant g_{\tilde{\EPS}} \cdot \left((1-\tilde{\EPS}) \cdot \PROFIT{\overline{N}} - \frac{\frac{\tilde{\EPS}}{3} \cdot \PROFIT{\overline{N}}}{1 - \frac{\tilde{\EPS}}{3}} \right) \\
&\geqslant g_{\tilde{\EPS}} \cdot \left(1-\frac{5}{3} \cdot \tilde{\EPS}\right) \cdot \PROFIT{\overline{N}} \geqslant\left(1-\frac{5}{3} \cdot \tilde{\EPS}\right) \cdot \left( 1 - \frac{\tilde{\EPS}}{3} \right)  \cdot \OPT(\SIGMA_s) \\
&= (1- 2 \cdot \tilde{\EPS}) \cdot \OPT(\SIGMA_s) = (1-\EPS) \cdot \OPT(\SIGMA_s).
\end{align*}
\end{proof}

\section{Concluding Remarks}
We studied the power of frequency predictions in improving the performance of online algorithms for the discrete bin cover problem. In particular, we showed that when input is adversarially generated, frequency predictions (from historical data) can help design algorithms with adjustable trade-offs between consistency and robustness. In particular, one can achieve near-optimal solutions, assuming predictions are error-free. On the other hand, when input is generated stochastically, we showed that frequencies could be learned from an input prefix of constant length to achieve solutions that are arbitrarily close to optimal with arbitrarily high confidence. 
An interesting variant of the problem concerns inputs generated adversarially but permuted randomly. This setting is in line with recent work on the analysis of algorithms with random order input (see, e.g.,~\cite{GKL22,BBFL23}). We expect that our algorithm for the stochastic setting can still be applied to this setting to achieve close to optimal solutions with high confidence, although a different analysis is needed.

\bibliographystyle{plain}
\bibliography{refs.bib}

\newpage

\appendix
\section*{Appendix}
\section{Missing Proofs}\label{sec:online_setting}

\begin{theorem}\label{thm:impossibility_deterministic_app}
Let $\ALG$ be a deterministic online algorithm for $\DBC{k}{}$, with $k \geqslant 5$. 
Then, $\CR{\ALG} \leqslant \frac{1}{2} + \frac{1}{H_{k-1}}$, where $H_{k-1} = \sum_{i=1}^{k-1} \frac{1}{i}$.
\end{theorem}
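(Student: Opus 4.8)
The plan is to run an adversary argument in the spirit of Csirik and Totik~\cite{CT87}, but with the adversary confined to item sizes from $F_k$. Since $\ALG$ is deterministic, the adversary may simulate $\ALG$ and react to its choices; equivalently, one fixes a single hard instance by tracing $\ALG$'s (predetermined) behaviour. By the reduction already recorded in the paper, a wasteful algorithm is dominated by a non-wasteful one whose competitive ratio is at least as large, so it suffices to establish the bound for non-wasteful $\ALG$; this lets us speak unambiguously of the \emph{level profile} of $\ALG$'s bins, i.e.\ the number of bins currently at each level $\frac{j}{k}$ for $j \in \{1,\dots,k-1\}$ together with the number already covered. The instance is revealed in at most $k-1$ rounds, round $i$ consisting of a large block of $n$ equal-size items, and $n$ is sent to $\infty$ at the end so that the additive constant $b$ in the definition of $\CR{\ALG}$ contributes only lower-order terms.

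After round $i$ the adversary inspects $\ALG$'s level profile and computes the ratio $\ALG(\SIGMA)/\OPT(\SIGMA)$ that would result from stopping now, possibly after appending a short \emph{finishing} block whose length is independent of $n$ and hence changes $\OPT$ by only a constant. If this ratio is at most $\frac12 + \frac{1}{H_{k-1}}$, the adversary terminates and we are done for this instance. Otherwise it proceeds to round $i+1$, choosing the size of the new block so as to punish the ``commitment'' that keeping the round-$i$ ratio high forced on $\ALG$: intuitively, to beat $\frac12$ on a prefix of small items $\ALG$ must have accumulated many partially filled bins, but those are exactly what a block of complementary-size items exploits; conversely, if $\ALG$ has instead aggressively closed bins, then simply halting already pins its ratio below the threshold.

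The crux is the branch where $\ALG$ survives all $k-1$ potential stopping points. One converts each of the $k-1$ survival inequalities into a lower bound on the amount of ``useful volume'' $\ALG$ must have committed to near-complete bins by the end of that round; by the choice of block sizes these bounds have the shape ``the useful volume created in round $i$ is at least $\frac{c}{i}\,n$ up to lower-order terms'' for a fixed constant $c$, so summing over $i = 1,\dots,k-1$ forces $\ALG$ to have committed volume proportional to $H_{k-1}\cdot n$. Since only $\Theta(n)$ volume is available per round and $\Theta(kn)$ in total, and the survival inequalities are scaled precisely against $H_{k-1}$, this is infeasible for any legal bin assignment — a contradiction. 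Hence the adversary stops at some round, giving $\CR{\ALG} \leqslant \frac12 + \frac{1}{H_{k-1}}$. Note that $k \geqslant 5$ is exactly the range in which $\frac12 + \frac{1}{H_{k-1}} < 1$ (for $k \leqslant 4$ one checks $\frac12 + \frac{1}{H_{k-1}} \geqslant \frac{23}{22} > 1$, so the statement is vacuous), which also explains the hypothesis.

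The main obstacle I anticipate is the combinatorial bookkeeping in the crux step: giving an exact — not merely asymptotic — description of the best profit an arbitrary non-wasteful deterministic $\ALG$ can obtain on each prefix-plus-finisher purely in terms of its level profile, and then verifying that the resulting $k-1$ inequalities are jointly infeasible with the sharp constant $\frac{1}{H_{k-1}}$ rather than some looser $o(1)$ function of $k$. Secondary care is needed to control $\OPT$ and all the finishing blocks simultaneously across the $k$ possible termination branches, and to pass to the limit $n \to \infty$ uniformly in those branches; these steps are routine but must be carried out consistently.
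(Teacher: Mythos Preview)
Your plan is over-engineered relative to what the paper actually does, and the place you flag as ``the main obstacle'' is precisely where the paper's idea differs from yours. The paper uses \emph{no} multi-round adaptivity: there is a single common prefix of $n$ items, all of size $\frac{1}{k}$, and then $k-1$ possible one-block continuations $\SIGMA_i^n = \langle (\tfrac{1}{k})^n, (\tfrac{k-i}{k})^{\lfloor n/i\rfloor}\rangle$ for $i=1,\dots,k-1$. Because $\ALG$ is deterministic it handles the common prefix identically on all $k-1$ instances, so after the prefix one records, for each $j\in\{1,\dots,k-1\}$, the number $q_j$ of bins currently holding between $j$ and $k-1$ items of size $\tfrac{1}{k}$, and $q_k$ the number holding at least $k$. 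The point is that on $\SIGMA_i^n$ only bins with level at least $\tfrac{i}{k}$ can be covered by a single new item, so $\ALG(\SIGMA_i^n)\le q_k + q_i + \tfrac12(\lfloor n/i\rfloor - q_i)$. Combining with the assumed ratio $\tfrac12+\mu$ gives $\mu\lfloor n/i\rfloor - b \le \tfrac12 q_i + q_k$; summing over $i$ and using the trivial budget $\sum_{i=1}^{k-1} q_i \le n - k q_k$ yields $\mu\, n\, H_{k-1} \le n + O(k)$, hence $\mu\le \tfrac{1}{H_{k-1}}$ after sending $n\to\infty$.

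So the $\tfrac{1}{i}$ scaling you want does not come from any per-round ``useful volume'' commitment; it comes from the fact that $\OPT(\SIGMA_i^n)=\lfloor n/i\rfloor$, because a size-$\tfrac{k-i}{k}$ item needs exactly $i$ of the $\tfrac{1}{k}$-items to cover a bin. Your proposal leaves this mechanism unspecified (you assert the bounds ``have the shape $\tfrac{c}{i}\,n$'' without saying what the block sizes are or why), and your multi-round adaptive scheme with $n$ items per round would force you to track how the level profile evolves across rounds --- exactly the bookkeeping you worry about --- whereas the paper sidesteps it entirely by freezing the profile once after a single prefix. Your remark on $k\ge 5$ is fine and matches the paper's use of that hypothesis.
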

\begin{proof}
Suppose that $\CR{\ALG} = \frac{1}{2} + \mu$, for some online algorithm, $\ALG$, and some $\frac{1}{2} \geqslant \mu > \frac{1}{H_{k-1}}$.
Since $k \geqslant 5$, such $\mu$ exists.
Let $n \gg k$, and set
\begin{align*}
\SIGMA_i^n = \left\langle \left( \frac{1}{k} \right)^n , \left( \frac{k-i}{k} \right)^{\left\lfloor \frac{n}{i} \right\rfloor} \right\rangle,
\end{align*}
for each $i \in [k-1]$.
Clearly, $\OPT(\SIGMA^n_i) = \left\lfloor \frac{n}{i} \right\rfloor$, and since $\CR{\ALG} = \frac{1}{2} + \mu$, then
\begin{align}\label{eq:impossibility_performance_guarantee}
\ALG(\SIGMA_i^n) \geqslant \left( \frac{1}{2} + \mu \right) \cdot \left\lfloor \frac{n}{i} \right\rfloor - b,
\end{align}
for all $i \in [k-1]$.
Since $\ALG$ is deterministic, and the first $n$ requests of all $\SIGMA_i^n$'s are identical, $\ALG$ distributes the first $n$ items identically on all instances $\SIGMA_i^n$.
For each $j = 1,2,\ldots,k-1$, let $q_j$ be the number of bins in $\ALG[\SIGMA_i^n]$ that receives between $j$ and $k-1$ items of size $\frac{1}{k}$, and let $q_k$, be the number of bins that receive at least $k$ such items.
By definition, $\sum_{i=1}^{k-1} q_i \leqslant n - k \cdot q_k$. 
Then, for each $i \in [k-1]$,
\begin{align}\label{eq:impossibility_bad_instance}
\ALG(\SIGMA_i^n) \leqslant q_i + \frac{\left\lfloor \frac{n}{i} \right\rfloor - q_i}{2} + q_k = \frac{\left\lfloor \frac{n}{i} \right\rfloor + q_i}{2} + q_k.
\end{align}
Combining Equations~\eqref{eq:impossibility_performance_guarantee} and~\eqref{eq:impossibility_bad_instance} then, for all $i \in [k-1]$,
\begin{align*}
\left( \frac{1}{2} + \mu \right) \cdot \left\lfloor \frac{n}{i} \right\rfloor - b \leqslant \frac{\left\lfloor \frac{n}{i} \right\rfloor + q_i}{2} + q_k,
\end{align*}
and so
\begin{align*}
\mu \cdot \left\lfloor \frac{n}{i} \right\rfloor - b \leqslant \frac{q_i}{2} + q_k.
\end{align*}
Summing over $i \in [k-1]$, 
\begin{align*}
\mu \cdot n \cdot \left( \sum_{i=1}^{k-1} \frac{1}{i} \right) - (b+\mu)\cdot(k-1) \leqslant \frac{1}{2} \cdot \left( \sum_{i=1}^{k-1} q_i \right) + (k-1) \cdot q_k.
\end{align*}
Since $\sum_{i=1}^{k-1} q_i \leqslant n - k \cdot q_k$ and $\mu \leqslant \frac{1}{2}$, 
\begin{align*}
\mu \cdot n \cdot H_{k-1} &\leqslant \frac{n - k \cdot q_k}{2} + (k-1) \cdot (q_k + b + \frac{1}{2}) \\
&\leqslant \frac{n + k \cdot q_k + (k-1) \cdot (2\cdot b + 1)}{2} \\
&\leqslant \frac{n + k \cdot \frac{n - \sum_{i=1}^{k-1} q_i }{k} + (k-1) \cdot (2\cdot b + 1)}{2} \\
&\leqslant \frac{2 \cdot n + (k-1) \cdot (2\cdot b + 1)}{2}.
\end{align*}
Hence, $\mu \leqslant \frac{1}{H_{k-1}} + \frac{(k-1) \cdot (2\cdot b + 1)}{2 \cdot n \cdot H_{k-1}}$.
Since $\frac{1}{H_{k-1}} < \mu$, and $\lim_{n\to\infty} \frac{(k-1) \cdot (2\cdot b + 1)}{2 \cdot n \cdot H_{k-1}} = 0$, there exists an $n$ such that $\frac{1}{H_{k-1}} + \frac{(k-1) \cdot (2\cdot b + 1)}{2 \cdot n \cdot H_{k-1}} < \mu$, a contradiction. 
\end{proof}

\begin{theorem}\label{thm:impossibility_randomized_app}
Let $\ALG$ be a randomized online algorithm for $\DBC{k}{}$, with $k \geqslant 5$. 
Then, $\ECR{\ALG} \leqslant \frac{1}{2} + \frac{1}{H_{k-1}}$.
\end{theorem}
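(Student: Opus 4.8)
The plan is to invoke Yao's principle, using the deterministic lower bound of Theorem~\ref{thm:impossibility_deterministic_app} as the obstruction. Suppose, for contradiction, that $\ECR{\ALG} = \frac{1}{2} + \mu$ for some randomized online algorithm $\ALG$ and some $\mu > \frac{1}{H_{k-1}}$ (such $\mu$ exists since $k \geqslant 5$ makes $\frac{1}{H_{k-1}} < \frac{1}{2}$); then there is a constant $b$ with $\mathbb{E}[\ALG(\SIGMA)] \geqslant (\frac{1}{2} + \mu) \cdot \OPT(\SIGMA) - b$ for every instance $\SIGMA$. I would reuse the same family of instances $\SIGMA_i^n = \langle (1/k)^n , ((k-i)/k)^{\lfloor n/i \rfloor} \rangle$ for $i \in [k-1]$, and put the uniform distribution $p_i = \frac{1}{k-1}$ on $\{\SIGMA_1^n , \ldots, \SIGMA_{k-1}^n\}$ (this is exactly the averaging that the ``summing over $i$'' step performs in the deterministic proof). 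Since a randomized online algorithm against an oblivious adversary is a distribution over deterministic online algorithms, linearity of expectation yields, for each $n$, a deterministic algorithm $D_n$ with $\frac{1}{k-1}\sum_{i=1}^{k-1} D_n(\SIGMA_i^n) \geqslant (\frac{1}{2} + \mu) \cdot \frac{1}{k-1}\sum_{i=1}^{k-1} \OPT(\SIGMA_i^n) - b$ (after, if necessary, absorbing a constant slack into $b$).

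The rest is the deterministic analysis applied to $D_n$. Because the first $n$ items of every $\SIGMA_i^n$ coincide (they are all of size $1/k$), $D_n$ distributes them identically, so the counts $q_1 \geqslant \cdots \geqslant q_{k-1}$ and $q_k$ of Theorem~\ref{thm:impossibility_deterministic_app} are well defined, independent of $i$, satisfy $\sum_{i=1}^{k-1} q_i \leqslant n - k q_k$, and give $D_n(\SIGMA_i^n) \leqslant \frac{\lfloor n/i\rfloor + q_i}{2} + q_k$ for every $i$, i.e.\ Equation~\eqref{eq:impossibility_bad_instance}. Averaging this over $i$ and using $\OPT(\SIGMA_i^n) = \lfloor n/i\rfloor$ gives $\frac{1}{k-1}\sum_i D_n(\SIGMA_i^n) \leqslant \frac{1}{2}\cdot\frac{1}{k-1}\sum_i \OPT(\SIGMA_i^n) + \frac{1}{2(k-1)}\sum_i q_i + q_k$; since $\frac{1}{2(k-1)} \leqslant \frac{1}{k}$ for $k \geqslant 2$, combining with $q_k \leqslant \frac{n - \sum_i q_i}{k}$ shows $\frac{1}{2(k-1)}\sum_i q_i + q_k \leqslant \frac{n}{k}$. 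Plugging this into the inequality from the previous paragraph gives $\mu \cdot \frac{1}{k-1}\sum_{i=1}^{k-1}\lfloor n/i\rfloor \leqslant \frac{n}{k} + b$; since $\sum_{i=1}^{k-1}\lfloor n/i\rfloor \geqslant n H_{k-1} - (k-1)$, dividing by $n$ and letting $n \to \infty$ forces $\mu \leqslant \frac{k-1}{k H_{k-1}} < \frac{1}{H_{k-1}}$, contradicting the choice of $\mu$.

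I do not anticipate a real obstacle here: once Yao's principle is in place, the argument is the same bookkeeping as the deterministic case. The only points deserving care are (i) phrasing Yao's principle for our additive-constant notion of competitiveness --- ``if $\ALG$ is $(\frac{1}{2}+\mu)$-competitive in expectation with additive constant $b$, then against any fixed input distribution some deterministic algorithm has expected profit at least $(\frac{1}{2}+\mu)\,\mathbb{E}[\OPT] - b$'' --- and (ii) checking that all lower-order terms (the floors $\lfloor n/i\rfloor$, the constant $b$, the $\sum_i q_i \geqslant 0$ slack) vanish after dividing by $n$, exactly as in Theorem~\ref{thm:impossibility_deterministic_app}. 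If one prefers to match the stated bound literally rather than prove the slightly sharper $\frac{k-1}{kH_{k-1}}$, the same loose step used in the deterministic proof (bounding $(k-2)q_k \leqslant k q_k$) recovers $\frac{1}{2} + \frac{1}{H_{k-1}}$ directly.
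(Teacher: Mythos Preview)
Your proposal is correct and follows essentially the same route as the paper: Yao's principle with the uniform distribution over the instances $\SIGMA_i^n$, the bound $D_n(\SIGMA_i^n) \leqslant \tfrac{\lfloor n/i\rfloor + q_i}{2} + q_k$, and the identity $\sum_i q_i \leqslant n - k q_k$ to finish. The only cosmetic difference is that the paper bounds the ratio $\sup_j \mathbb{E}_{\bm{y^n}}[\ALG_j]/\mathbb{E}_{\bm{y^n}}[\OPT]$ directly rather than arguing by contradiction, and your observation that the argument in fact yields the sharper constant $\tfrac{k-1}{kH_{k-1}}$ is correct (the paper simply relaxes this to $\tfrac{1}{H_{k-1}}$).
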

\begin{proof}
Fix $k \geqslant 5$, and let $n \geqslant \frac{k^2 \cdot (H_{k-1} + 2)}{2 \cdot H_{k-1}}$.
Define the sequences $\SIGMA_i^n$ as in the proof of Theorem~\ref{thm:impossibility_deterministic_app}, 
and let $\bm{{y^n(i)}}$ be the uniform distribution of the set $\{\SIGMA_i^n\}_{i=1}^{k-1}$.
Further, fix a randomized algorithm $\ALG$, and let $\bm{{x(j)}}$ be the distribution of deterministic algorithms, $\ALG_j$, that $\ALG$ decides between.
By the arguments in the proof of Theorem~\ref{thm:impossibility_deterministic_app}, 
\begin{align*}
\ALG_j(\SIGMA_i^n) \leqslant \frac{\left\lfloor \frac{n}{i}\right\rfloor + q_i}{2} + q_k.
\end{align*}
for all $i$ and $j$.
Hence, for any $j$, we have that
\begin{align*}
\mathbb{E}_{\bm{{y^n(i)}}}[\ALG_j(\SIGMA_i^n)] &\leqslant \frac{ \sum_{i=1}^{k-1} \left( \frac{\left\lfloor \frac{n}{i}\right\rfloor + q_i}{2} + q_k \right)}{k-1} \leqslant \frac{n \cdot H_{k-1} + n + (k-2) \cdot q_k}{2\cdot(k-1)}.
\end{align*}
Since this holds for all deterministic strategies, $\ALG_j$, we get that
\begin{align*}
\sup_j \mathbb{E}_{\bm{{y^n(i)}}}[\ALG_j(\SIGMA_i^n)] \leqslant \frac{n \cdot H_{k-1} + n + (k-2) \cdot q_k}{2\cdot(k-1)}.
\end{align*}
Further,
\begin{align*}
\mathbb{E}_{\bm{{y^n(i)}}}[\OPT(\SIGMA_i^n)] = \frac{1}{k-1} \cdot \left( \sum_{i=1}^{k-1} \left\lfloor \frac{n}{i} \right\rfloor \right) \geqslant \frac{n \cdot H_{k-1} - k + 1}{k-1}.
\end{align*}
Hence, using that $q_k \leqslant \frac{n}{k}$,
\begin{align*}
\frac{\sup_j \mathbb{E}_{\bm{{y^n(i)}}}[\ALG_j(\SIGMA_i^n)]}{\mathbb{E}_{\bm{{y^n(i)}}}[\OPT(\SIGMA_i^n)]} &\leqslant \frac{\frac{n \cdot H_{k-1} + n + (k-2) \cdot q_k}{2\cdot(k-1)}}{\frac{n \cdot H_{k-1} - k + 1}{k-1}} \leqslant \frac{n \cdot H_{k-1} + 2\cdot n - \frac{2\cdot n}{k}}{2\cdot n \cdot H_{k-1} - 2\cdot k}.
\end{align*}
Since $n \geqslant \frac{k^2 \cdot (H_{k-1} + 2)}{2 \cdot H_{k-1}}$, then $\frac{n \cdot H_{k-1} + 2\cdot n - \frac{2\cdot n}{k}}{2\cdot n \cdot H_{k-1} - 2\cdot k} \leqslant \frac{1}{2} + \frac{1}{H_{k-1}}$, and so
\begin{align*}
\liminf_{n\to\infty} \frac{\sup_j \mathbb{E}_{\bm{{y^n(i)}}}[\ALG_j(\SIGMA_i^n)]}{\mathbb{E}_{\bm{{y^n(i)}}}[\OPT(\SIGMA_i^n)]} \leqslant \frac{1}{2} + \frac{1}{H_{k-1}}.
\end{align*}
Lastly, since $\mathbb{E}_{\bm{{y^n(i)}}}[\OPT(\SIGMA_i^n)] = \left\lfloor \frac{n}{i} \right\rfloor$, then $\liminf_{n\to\infty} \mathbb{E}_{\bm{{y^n(i)}}}[\OPT(\SIGMA_i^n)] = \infty$, as so, by Yao's principle, $\ECR{\ALG} \leqslant \frac{1}{2} + \frac{1}{H_{k-1}}$. 
\end{proof}

\end{document}